\documentclass[11pt, a4 paper]{article}

\usepackage{amin}

\usepackage[margin=1in]{geometry}
\usepackage{fancyhdr}

\pagestyle{fancy}
\headheight 35pt

\rhead{\thepage}
\chead{}
\lhead{\leftmark}
\rfoot{}
\cfoot{}
\lfoot{}

\captionsetup{width=\linewidth}


\title{An Efficient Algorithm Computing Composition Factors of $T(V)^{\otimes n}$}
\author{Amin Saied}
\date{}

\begin{document}
\maketitle

\abstract{
\noindent We present an algorithm that computes the composition factors of the $n$-th tensor power of the free associative algebra on a vector space. The composition factors admit a description in terms of certain coefficients $c_{\lambda\mu}$ determining their irreducible structure. By reinterpreting these coefficients as counting the number of ways to solve certain `decomposition-puzzles' we are able to design an efficient algorithm extending the range of computation by a factor of over 750. Furthermore, by visualising the data appropriately, we gain insights into the nature of the coefficients leading to the development of a new representation theoretic framework called $\PD$-modules.
}

\section{Introduction}
Certain coefficients $c_{\lambda\mu} \in \mathbb{N}$, indexed by pairs of partitions $\lambda, \mu$, naturally arise in the study of the Johnson homomorphism of the mapping class group. They can be thought of as describing the decomposition of Schur functors on the free Lie algebra $\mathcal{L}(V)$ into Schur functors on $V$ itself (see Section \ref{SecMathBackground}). In this paper we present an algorithm computing the coefficients $c_{\lambda\mu}$. Our approach is to reinterpret the coefficients as counting solutions to a certain combinatorial problem we call \emph{decomposition puzzles}. In particular, we prove the following theorem.

\paragraph{Theorem \ref{TheoremPuzzles}.} \emph{The coefficients $c_{\lambda\mu}$ counts the number of (weighted) solutions to $(\mu, \lambda)$ decomposition puzzles.}\\


\noindent This combinatorial description provides a discretisation of the problem into several steps outlined in Fig. \ref{FigSolutionPath}. By analysing the computational complexity of each step we are able to make key optimisations to the algorithm. In so doing we are able to compute $257,049$ coefficients, extending the known range of coefficients by a factor of over 750.\\

\noindent At a high level, a solution to a $(\mu, \lambda)$ decomposition puzzle can be represented as a path from $\mu$ to $\lambda$.
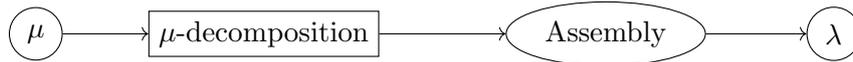
\begin{figure}[H]
\begin{center}
\vspace{-1cm}
\begin{tikzpicture}
[level distance=1.5cm,
level 1/.style={sibling distance=1.5cm},
level 2/.style={sibling distance=1cm},
level 3/.style={sibling distance=0.5cm, level distance = 1cm}]

\tikzstyle{every node}=[circle,draw]

\node [draw=none] (Root) {}
child { node (A) {$\mu$} edge from parent[draw=none] }
child { node [draw=none] {} edge from parent[draw=none] }
child { node [shape=rectangle] (B) {$\mu$-decomposition} edge from parent[draw=none] }
child { node [draw=none] {} edge from parent[draw=none] }
child { node [draw=none] {} edge from parent[draw=none] }
child { node [shape=ellipse] (C) {Assembly} edge from parent[draw=none] }
child { node [draw=none] {} edge from parent[draw=none] }
child { node (D) {$\lambda$} edge from parent[draw=none] }
;

\draw[->] (A) -- (B);
\draw[->] (B) -- (C);
\draw[->] (C) -- (D);

\end{tikzpicture}
\end{center}
\caption{Path representing the steps involved in solving a $(\mu, \lambda)$ decomposition puzzle.}
\label{FigSolutionPath}
\end{figure}

\noindent We collect all such paths into a tree (see Fig. \ref{FigDecompositionTree}), whence Theorem \ref{TheoremPuzzles} reinterprets $c_{\lambda\mu}$ as counting the number of its leaves labelled by $\lambda$. The major hurdle in computing $c_{\lambda\mu}$ is a combinatorial explosion arising in the number of possible assemblies of a given $\mu$-decomposition as the size of $\mu$ grows (Eq. \ref{EqTooManyComputes}). Our key optimisation is the so called \emph{shape analysis} of a $\mu$-decomposition (Section \ref{SubSecShapeAnal}) which allows us to more efficiently search the leaves of the tree by fixing the degree of the target partition $\lambda$ in question.

\newpage

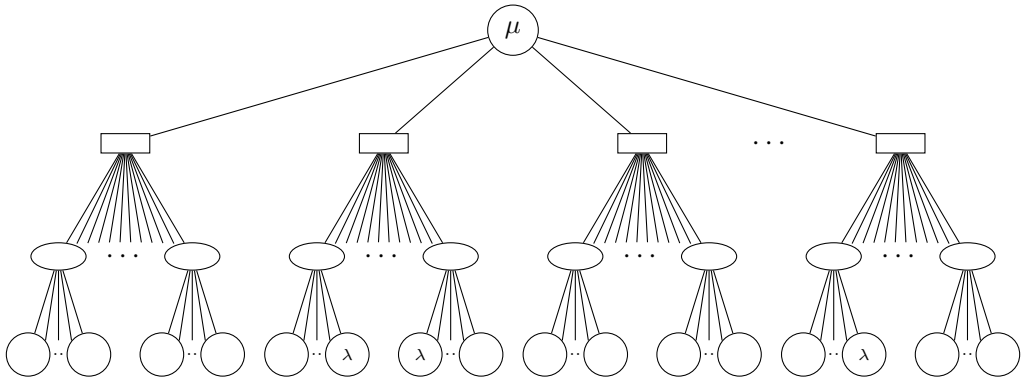
\begin{figure}
\begin{center}
\begin{tikzpicture}[level distance=1.5cm,
level 1/.style={sibling distance=3.4cm},
level 2/.style={sibling distance=0.16cm},
level 3/.style={sibling distance=0.2cm, level distance = 1.3cm}]

\tikzstyle{every node}=[circle,draw]

\node (Root) {\small$\mu$}
child { 
    node [shape=rectangle] {~~~~}
    	child {
		node [shape=ellipse] (ass11) {~~~}
			child{
				node (L11) {~~~}
			}
			child { node[draw=none] {}}
			child { node[draw=none] {}}
			child { node[draw=none] {}}
			child{
				node (L12) {~~~}
			}
	}
	child { node[draw=none] {}}
	child { node[draw=none] {}}
	child { node[draw=none] {}}
	child { node[draw=none] {}}
	child { node[draw=none] {}}
	child { node[draw=none] {}}
	child { node[draw=none] {}}
	child { node[draw=none] {}}
	child { node[draw=none] {}}
	child { node[draw=none] {}}
	child {
		node [shape=ellipse] (ass12) {~~~}
			child{
				node (L21) {~~~}
			}
			child { node[draw=none] {}}
			child { node[draw=none] {}}
			child { node[draw=none] {}}
			child{
				node (L22) {~~~}
			}
	}
}
child {
    node [shape=rectangle]  {~~~~} 
    	child {
		node [shape=ellipse] (ass21) {~~~}
			child{
				node (L31) {~~~}
			}
			child { node[draw=none] {}}
			child { node[draw=none] {}}
			child { node[draw=none] {}}
			child{
				node (L32) {\tiny$\lambda$}
			}
	}
	child { node[draw=none] {}}
	child { node[draw=none] {}}
	child { node[draw=none] {}}
	child { node[draw=none] {}}
	child { node[draw=none] {}}
	child { node[draw=none] {}}
	child { node[draw=none] {}}
	child { node[draw=none] {}}
	child { node[draw=none] {}}
	child { node[draw=none] {}}
	child {
		node [shape=ellipse] (ass22) {~~~}
			child{
				node (L41) {\tiny$\lambda$}
			}
			child { node[draw=none] {}}
			child { node[draw=none] {}}
			child { node[draw=none] {}}
			child{
				node (L42) {~~~}
			}
	}
}
child {
    node [shape=rectangle] (dec1) {~~~~}
    	child {
		node [shape=ellipse] (ass31) {~~~}
			child{
				node (L51) {~~~}
			}
			child { node[draw=none] {}}
			child { node[draw=none] {}}
			child { node[draw=none] {}}
			child{
				node (L52) {~~~}
			}
	}
	child { node[draw=none] {}}
	child { node[draw=none] {}}
	child { node[draw=none] {}}
	child { node[draw=none] {}}
	child { node[draw=none] {}}
	child { node[draw=none] {}}
	child { node[draw=none] {}}
	child { node[draw=none] {}}
	child { node[draw=none] {}}
	child { node[draw=none] {}}
	child {
		node [shape=ellipse] (ass32) {~~~}
			child{
				node (L61) {~~~}
			}
			child { node[draw=none] {}}
			child { node[draw=none] {}}
			child { node[draw=none] {}}
			child{
				node (L62) {~~~}
			}
	}
}
child {
    node [shape=rectangle] (dec2) {~~~~}
    	child {
		node [shape=ellipse] (ass41) {~~~}
			child{
				node (L71) {~~~}
			}
			child { node[draw=none] {}}
			child { node[draw=none] {}}
			child { node[draw=none] {}}
			child{
				node (L72) {\tiny$\lambda$}
			}
	}
	child { node[draw=none] {}}
	child { node[draw=none] {}}
	child { node[draw=none] {}}
	child { node[draw=none] {}}
	child { node[draw=none] {}}
	child { node[draw=none] {}}
	child { node[draw=none] {}}
	child { node[draw=none] {}}
	child { node[draw=none] {}}
	child { node[draw=none] {}}
	child {
		node [shape=ellipse] (ass42) {~~~}
			child{
				node (L81) {~~~}
			}
			child { node[draw=none] {}}
			child { node[draw=none] {}}
			child { node[draw=none] {}}
			child{
				node (L82) {~~~}
			}
	}
};

\path (dec1.west) -- (dec2.east) node[draw=none] [midway] {$\cdots$};
\path (ass11.west) -- (ass12.east) node[draw=none] [midway] {$\cdots$};
\path (ass21.west) -- (ass22.east) node[draw=none] [midway] {$\cdots$};
\path (ass31.west) -- (ass32.east) node[draw=none] [midway] {$\cdots$};
\path (ass41.west) -- (ass42.east) node[draw=none] [midway] {$\cdots$};

\path (L11.west) -- (L12.east) node[draw=none] [midway] {\tiny$\cdot\cdot$};
\path (L21.west) -- (L22.east) node[draw=none] [midway] {\tiny$\cdot\cdot$};
\path (L31.west) -- (L32.east) node[draw=none] [midway] {\tiny$\cdot\cdot$};
\path (L41.west) -- (L42.east) node[draw=none] [midway] {\tiny$\cdot\cdot$};
\path (L51.west) -- (L52.east) node[draw=none] [midway] {\tiny$\cdot\cdot$};
\path (L61.west) -- (L62.east) node[draw=none] [midway] {\tiny$\cdot\cdot$};
\path (L71.west) -- (L72.east) node[draw=none] [midway] {\tiny$\cdot\cdot$};
\path (L81.west) -- (L82.east) node[draw=none] [midway] {\tiny$\cdot\cdot$};

\end{tikzpicture}
\end{center}
\caption{Tree representation of all decomposition puzzles. The root here is labelled by the partition $\mu$ and Theorem \ref{TheoremPuzzles} reinterprets $c_{\lambda\mu}$ as counting the number of leaves labelled by $\lambda$. Shape analysis prunes unwanted assemblies (nodes at depth 2), avoiding much unwanted computation.}
\label{FigDecompositionTree}
\end{figure}

\noindent With the algorithm in hand, we turn to the analysis of the data. Visualising the data appropriately we notice clustering patterns among the coefficients (as in Fig. \ref{FigVisOverview} below). We study these patterns in Section \ref{SecDataAnalysis} and present several conjectures about the combinatorial structure among the coefficients. 

\begin{figure}[H]
\begin{center}
\input{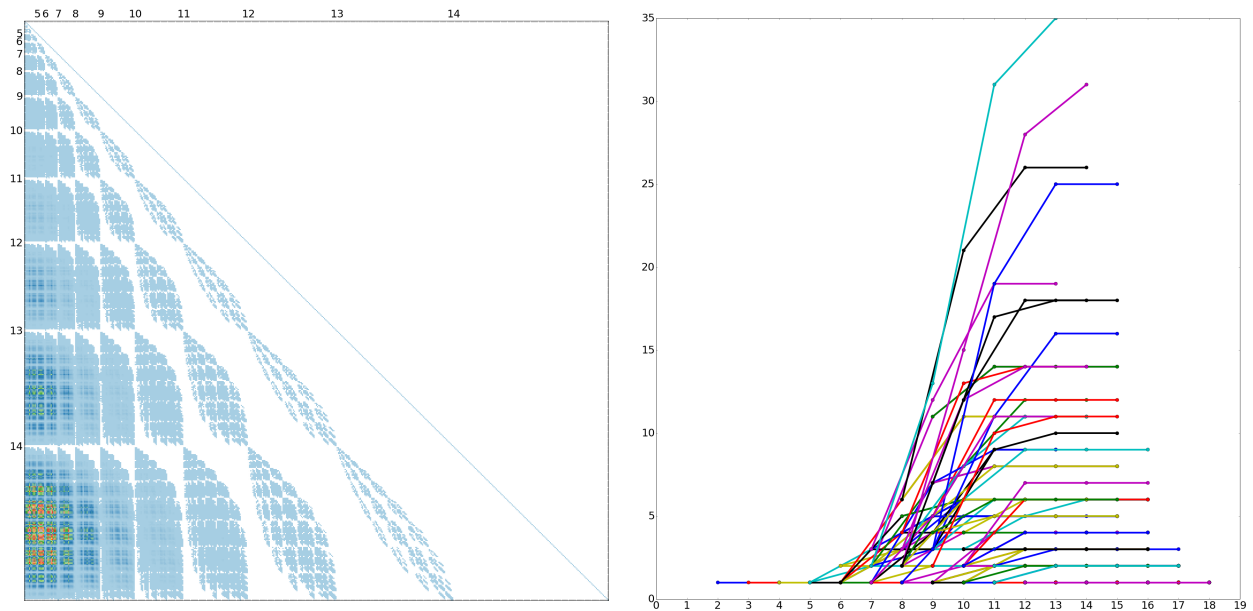}
\caption{Two visualisations of the coefficient data computed by our algorithm. The patterns emerging from these visualisations are suggestive of a stability phenomenon akin to $\FI$-modules. The plot on the left is a representation of all 257,049 coefficients computed by our algorithm. The plot on the right shows the evolution of those coefficients in a certain \emph{stable direction}. See Section \ref{SecDataAnalysis} for more detail.}
\label{FigVisOverview}
\end{center}
\end{figure}

\noindent One striking observation is a stability pattern akin to the representation stability of Church-Farb-Ellenberg \cite{CF, CEF}. In particular, their theory of $\FI$-modules has strong parallels with the stability patterns that emerge from our coefficient data. It is these parallels that lead to the conjecture of a new representation theoretic framework in the mould of $\FI$-modules. We introduce this representation theoretic framework, called the theory of $\PD$-modules, in a forthcoming paper \cite{Amin2}.

\newpage

\paragraph{Outline.} In Section \ref{SecPrelims} we provide some background material needed throughout the paper. In Section \ref{SecDecompPuzz} we introduce decomposition puzzles and prove the above theorem. The key insight behind our algorithm comes in Section \ref{SubSecShapeAnal} - which we call \emph{shape analysis}. We give the algorithm in Section \ref{SecAlgorithm}. In Section \ref{SecDataAnalysis} we analyse the data, giving several visualisations. These will lead to a number of conjectures relating to the combinatorics of the coefficients. Finally in Section \ref{SecRunningTime} we analyse the running time of our algorithm against a baseline algorithm.\\

\noindent The source code for our algorithm is publicly available on GitHub:
\begin{center}
\url{https://github.com/aminsaied/composition_factors}
\end{center}

\paragraph{Acknowledgements.} The author thanks his thesis advisor Martin Kassabov for introducing the problem and for his many invaluable insights.

\section{Preliminaries} \label{SecPrelims}

Throughout this paper we will make frequent use of Young diagrams to depict partitions. Concretely, a partition $\lambda \vdash n$ is a weakly decreasing sequence of positive integers $(\lambda_1, \ldots, \lambda_k)$ where $\sum_i \lambda_i = n$. To such a partition we associate a Young diagram, which is a collection of left-justified boxes, with $\lambda_i$ boxes in row $i$. For example, $\lambda = (5,3,2,1) \vdash 11$ has the following Young diagram.

\[
\Yvcentermath1
\yng(5,3,2,1)
\]

\noindent We also make use of the well-known correspondence between partitions of $n$ and irreducible representation of the symmetric group $\SYM_n$. We denote the irreducible $\SYM_n$-module associated to the partition $\lambda$ by $P_\lambda$, or sometimes simply by the Young diagram for $\lambda$.\\

\noindent Given a vector space $V$ and a partition $\lambda$, the vector space,
\[
\SF{\lambda}(V):= V^{\otimes n} \otimes_{\SYM_n} P_\lambda,
\]
is an irreducible $\GL(V)$-module. We call the functor $\SF{\lambda}$ the Schur functor (associated to $\lambda$). We will also have occasion to denote Schur functors $\SF{\lambda}(V)$ simply by their underlying Young diagrams when the distinction is clear or unimportant.\\

\noindent Two key ingredients baked into our algorithm are the \emph{tensor product} of partitions and the \emph{plethysm} of partitions. We describe those constructions now.

\subsection{Tensor product, $\otimes$}

We define the tensor product of two partitions $\lambda, \mu$ in terms of the Littlewood-Richardson coefficients. That is to say, the tensor product $\lambda \otimes \mu$ is a set of partitions $\nu$ of size $|\lambda| + |\mu|$ with multiplicities determined by the Littlewood-Richardson coefficient $L^\nu_{\lambda\mu}$. Concretely,
\begin{equation}
\lambda \otimes \mu = \bigoplus_{\nu \vdash |\lambda|+|\mu|} \nu^{\oplus L^{\nu}_{\lambda\mu}}
\end{equation}

\noindent The Littlewood-Richardson coefficients are \emph{combinatorial} in the sense that there is a combinatorial rule, known as the Littlewood-Richardson rule, for computing the coefficients. It turns out that the Littlewood-Richardson coefficients can be interpreted as counting the number of solutions of certain puzzles, so-called `honeycombs', introduced by Knutson-Tao-Woodward in \cite{KnutsonTaoWoodward}.\\

\begin{remark}
An important fact about the tensor product of partitions is that they sums sizes, so that if $\nu$ appears in $\lambda \otimes \mu$ then $|\nu| = |\lambda| +|\mu|$.
\end{remark}

\begin{example} The tensor product of $[2] \vdash 2$ with $[2,1] \vdash 3$ is a sum of partitions of size $2+3=5$.
\[
\Yvcentermath1
\yng(2) ~\otimes~ \yng(2,1) ~=~ \yng(2,2,1) ~\oplus~ \yng(3,1,1) ~\oplus~ \yng(3,2) ~\oplus~ \yng(4,1)
\]
\end{example}

\subsection{Plethysm, $\odot$}

The story for the plethysm is similar, but more complicated. The original plethysm problem is to understand the coefficients $M^\nu_{\lambda\mu}$ describing the composition of Schur functors,
\[
\SF{\mu} ( \SF{\lambda} (V) ) = \bigoplus_{\nu \vdash |\lambda|\cdot|\mu|}  \SF{\nu}(V)^{\oplus M^\nu_{\lambda \mu}}
\] 
We can then define the plethysm of two partitions $\lambda, \mu$ in terms of the coefficients $M^\nu_{\lambda\mu}$ via the formula,
\begin{equation}
\lambda \odot \mu = \bigoplus_{\nu \vdash |\lambda|\cdot|\mu|} \nu^{\oplus M^{\nu}_{\lambda\mu}}.
\end{equation}

\noindent The coefficients $M^\nu_{\lambda \mu}$ are much less well understood. We refer the reader to Stanley (\cite{Stanley1}) and Fulton-Harris (\cite{FH}) for an introduction.

\begin{remark}~
\begin{enumerate}
\item
An important fact about the plethysm\footnote{The word plethysm is from the Greek work meaning `multiplication'.} of partitions  is that it multiplies sizes, so that if $\nu$ appears in the $\lambda \odot \mu$ then $|\nu| = |\lambda| \cdot |\mu|$.

\item
The plethysm of symmetric functions is implemented in SAGE, where we implement our algorithm.
\end{enumerate}
\end{remark}

\begin{example} The plethysm of $[2] \vdash 2$ with $[2,1] \vdash 3$ is a sum of partitions of size $2\cdot3=6$.
\[
\Yvcentermath1
\yng(2) ~\odot~ \yng(2,1) ~=~ \yng(2,2,2) ~\oplus~ \yng(3,1,1,1) ~\oplus~ \yng(3,2,1) ~\oplus~ \yng(4,2)
\]
\end{example}

\subsection{Coefficients arising in the study of $T(V)^{\otimes n}$} \label{SecMathBackground}
In a forthcoming paper \cite{CKS} we study the connection between the Johnson homomorphism of the mapping class group and a certain $\Aut(F_n)$-module $T(V)^{\otimes n}$. In this section we briefly outline where the coefficients $c_{\lambda\mu}$ arise in that study.\\


\noindent Recall that the tensor algebra $T(V)$ can be viewed as the universal enveloping algebra of the free Lie algebra $\mathcal{L}(V)$ (see Eq (\ref{EqFreeLieAlgebra})) and as such has an increasing filtration $T(V)^{(i)}$. This induces a filtration on $T(V)^{\otimes n}$, and the PWB theorem gives that the associated graded
\[
\gr T(V)^{\otimes n} \cong \Sym^\ast(\mathcal{L}(V) \otimes \F^n ).
\]
The RHS admits a decomposition by Schur functors (see, for example, Fulton-Harris \cite{FH}), and we have,
\[
\Sym^\ast(\mathcal{L}(V) \otimes \F^n ) \cong \bigoplus_{\mu} \SF{\mu}\left( \mathcal{L}(V) \right) \otimes \SF{\mu}(\F^n).
\]
We introduce the coefficients $c_{\lambda\mu}$ by expressing the Schur functors on $\mathcal{L}(V)$ in terms of Schur functors on $V$, giving,

\begin{equation} \label{EqCoeffsFromFreeLie}
\SF{\mu}\left( \mathcal{L}(V) \right) \cong \bigoplus_{\lambda} c_{\lambda\mu} \SF{\lambda}(V),
\end{equation}
an infinite sum over all partitions.\\

\noindent We take (\ref{EqCoeffsFromFreeLie}) as a definition for the coefficient $c_{\lambda\mu}$. In other words, we define $c_{\lambda\mu}$ as the number of times $\SF{\lambda}(V)$ appears in the decomposition of $\SF{\mu}(\mathcal{L}(V))$.

\begin{remark}
In \cite{CKS} we expand on the relationship between the coefficients $c_{\lambda\mu}$ and the cohomology $H^\ast(\Aut(F_n);T(V)^{\otimes n})$ appearing in the study of the Johnson homomorphism of the mapping class group. In particular, we are able to make some cokernel computations in rank 2 and rank 3.
\end{remark}

\noindent The description of $c_{\lambda\mu}$ in (\ref{EqCoeffsFromFreeLie}) ties the coefficients to the study of the Johnson homomorphism of the mapping class group. In what follows, however, we are only interested in computing the coefficients themselves. We therefore devote the next section to recasting the definition of $c_{\lambda\mu}$ in combinatorial terms well suited to an algorithmic approach. The connection with the above description of $c_{\lambda\mu}$ in (\ref{EqCoeffsFromFreeLie}) is given in Theorem \ref{TheoremPuzzles}.


\section{Decomposition puzzles} \label{SecDecompPuzz}
In the introduction we represented a decomposition puzzle as a path in a certain tree. We start by expanding that path into a schematic overview of decomposition puzzles. We will go on to describe the component moves in the remainder of the section.

\begin{figure}[H]
\begin{center}
\begin{tikzpicture}[scale=0.8]
\Yvcentermath1

\node at (0,5) {$\mu = \yng(3,2)$};

\node at (15,5) {$\lambda = \yng(4,3,2,1)$};

\node at (5,0) {$\yng(2)$};
\node at (5,-2) {$\yng(1,1)$};
\node at (5,-4) {$\yng(1)$};

\node at (11.2,2) [rectangle,draw,fill=black!20,minimum height=3em, minimum width=6em] {};
\node at (11.2,2) [align=center] {Lie pieces \\ \small Section \ref{SubSecLiePieces}};
\node at (10,0) {$\yng(1)$};
\node at (10,-2) {$\yng(1,1)$};
\node at (10,-4) {$\yng(2,1)$};
\node at (10,-6) {$\yng(2,1,1)$};
\node at (10,-8) {$\vdots$};

\draw[->, thick]  (0,4)  -- (0,-2) node [below left] {} -- (4,-2);
\node at (0,-2) [rectangle,draw,fill=black!20,minimum height=3.3em, minimum width=8.2em] {};
\node at (0,-2) [align=center] {$\mu$-decomposition \\ \small Section \ref{SubSecMuDecomposition}};
\draw[->, thick]  (11,-2)  -- (15,-2) -- (15,3.5);
\draw[->, dashed] (2,5) -- (7.5,5) node [above] {$c_{\lambda\mu}$-contribution} -- (13,5);

\draw[thick] (5.8, 0) -- (6.2, 0) -- (8.8,-4) -- (9.2, -4);
\draw[thick] (5.8, -2) -- (6.2, -2) -- (8.8,0) -- (9.2, 0);
\draw[thick] (5.8, -4) -- (6.2, -4) -- (8.8,-2) -- (9.2, -2);

\draw[thick]  (6,2) -- (9,2) -- (9,-6) -- (6,-6) -- cycle;
\node at (7.5,3) [rectangle,draw,fill=black!20,minimum height=3em, minimum width=6.0em] {};
\node at (7.5,3) [align=center] {Assembly \\ \small Section \ref{SubSecAssembly}};

\end{tikzpicture}

\caption{A schematic overview of a solution to a decomposition puzzle. This also serves as an example, where in this case we have a partition $\mu = (3,2)$ decomposing into three partitions $(2), (1,1)$ and $(1)$. These are then assembled with Lie pieces when finally we arrive at the partition $\lambda=(4,3,2,1)$}
\label{TikzOverview}
\end{center}
\end{figure}
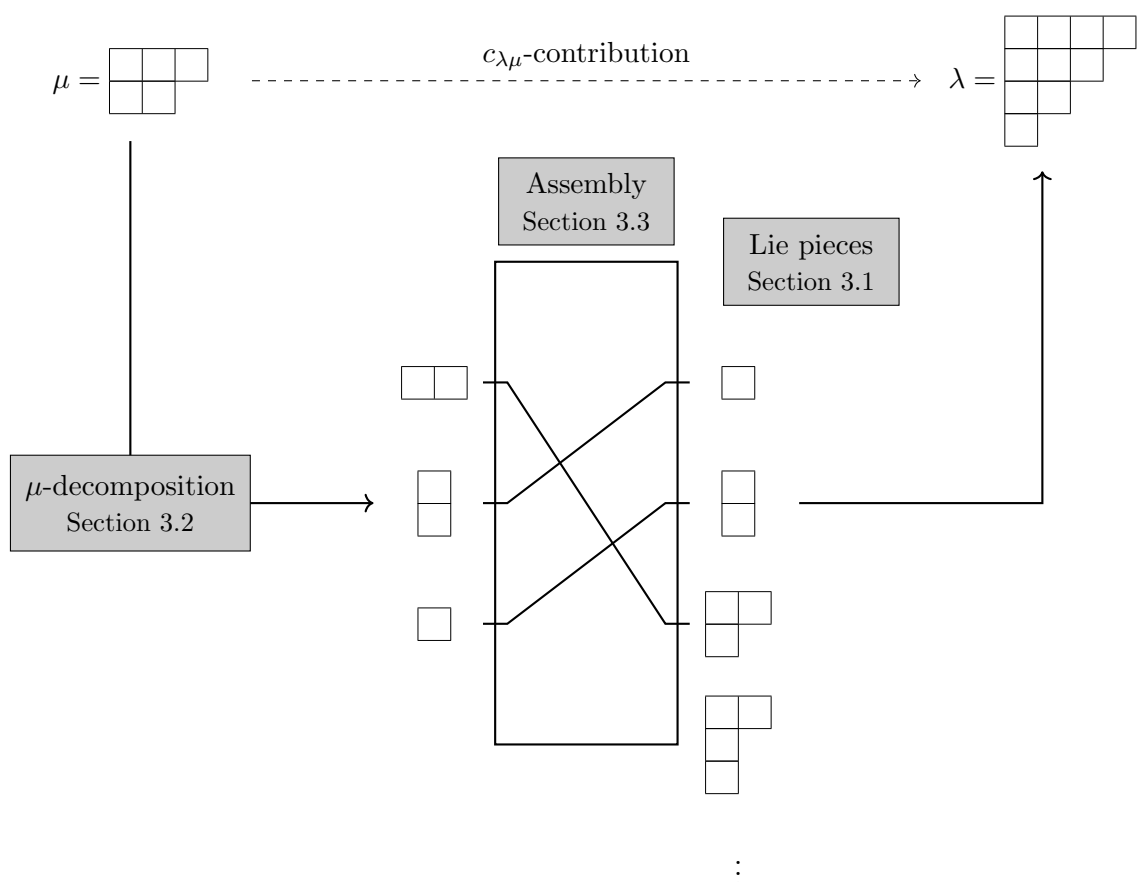

\newpage
\subsection{Lie pieces} \label{SubSecLiePieces}
Central to this point of view is the decomposition of the free Lie algebra into its irreducible $\GL(V)$-modules. Given a vector space $V$, the free Lie algebra on $V$ is a graded vector space,

\begin{equation} \label{EqFreeLieAlgebra}
\mathcal{L}(V) = \bigoplus_{i \geq 0} \mathcal{L}_i(V),
\end{equation}
whose graded pieces $\mathcal{L}_i(V)$ are $GL(V)$-modules. The decomposition into irreducible $\GL(V)$-modules of the first few terms are listed below.
\[
\Yvcentermath1
\mathcal{L}_1(V) = \yng(1) \qquad \mathcal{L}_2(V) = \yng(1,1) \qquad  \mathcal{L}_3(V) = \yng(2,1)
\]
\[
\Yvcentermath1
\mathcal{L}_4(V) = \yng(3,1) ~\oplus~ \yng(2,1,1)
\]
\[
\Yvcentermath1
\mathcal{L}_5(V) = \yng(4,1) ~\oplus~ \yng(3,2) ~\oplus~ \yng(3,1,1) ~\oplus~ \yng(2,2,1) ~\oplus~ \yng(2,1,1,1)
\]
In general we can describe the $i$-th term $\mathcal{L}_i(V)$ as,
\[
\mathcal{L}_i(V) \cong \Lie_i \otimes V^{\otimes i},
\]
where $\Lie_i$ is an $\SYM_i$-module Schur-Weyl dual to $\mathcal{L}_i(V)$ known as the Whitehouse module\footnote{It is also the arity $i-1$ part of the Lie operad.}. There is a combinatorial rule describing its irreducible decomposition by counting certain standard Young tableaux.

\begin{definition}
A standard \textbf{Young tableaux} of shape $\lambda \vdash n$ is a Young diagram of shape $\lambda$ filled in (bijectively) with the numbers $\{1, \ldots, n\}$ so that the numbers are increasing along the rows and columns.
\end{definition}

\begin{definition}
Given a tableaux $T$ of shape $\lambda$, define $\mbox{maj}(T)$ as the sum of $i$ such that $i+1$ lies below $i$ in $T$.
\end{definition}

\begin{example}
Let $\lambda = (2,1,1) \vdash 4$, then,
\[
T=\ytableausetup{centertableaux}
\begin{ytableau}
 1&2\\
3\\
4
\end{ytableau}
\]
is a standard tableaux of shape $\lambda$. We have that $\mbox{maj}(T) = 2 + 3 = 5$.
\end{example}

\begin{theorem}[Stanley]
Let $\lambda \vdash d$. Then the multiplicity of $P_\lambda$ in $\Lie_d$ is given by the number of Young tableaux $T$ of shape $\lambda$ satisfying $\mbox{maj}(T) \equiv 1 \mod d$.
\end{theorem}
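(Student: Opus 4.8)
\emph{Proof sketch / proposal.} The plan is to compute each side in closed form via symmetric functions and to match them using a root-of-unity filter. Write $n = d$, and let $\mathrm{ch}$ denote the Frobenius characteristic, so that $\mathrm{ch}$ carries $P_\lambda$ to the Schur function $s_\lambda$, $\langle \mathrm{ch}(M), s_\lambda \rangle$ is the multiplicity of $P_\lambda$ in $M$, and $\langle p_\rho, s_\lambda \rangle = \chi^\lambda(\rho)$, the value of the irreducible character $\chi^\lambda$ on a permutation of cycle type $\rho$. The first step is the classical identity
\[
\mathrm{ch}(\Lie_n) \;=\; \frac{1}{n}\sum_{m \mid n} \mu(m)\, p_m^{\,n/m},
\]
where $\mu$ is the number-theoretic M\"obius function (here applied to an integer, not the partition $\mu$ of the paper). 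This follows from the PBW theorem---which presents the multilinear part of the free associative algebra as the plethystic exponential of $\bigoplus_i \mathrm{ch}(\Lie_i)$---together with M\"obius inversion. Pairing with $s_\lambda$ then gives
\[
\mathrm{mult}(P_\lambda, \Lie_n) \;=\; \frac{1}{n}\sum_{m \mid n} \mu(m)\, \chi^\lambda\bigl(m^{\,n/m}\bigr),
\]
where $m^{\,n/m}$ denotes the cycle type consisting of $n/m$ parts equal to $m$.

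For the other side, set $f^\lambda(q) = \sum_{T} q^{\mathrm{maj}(T)}$, summed over all standard tableaux $T$ of shape $\lambda$ (the \emph{fake degree} of $P_\lambda$), and let $\omega = \exp(2\pi i / n)$. The number of $T$ with $\mathrm{maj}(T) \equiv 1 \pmod n$ is extracted by the root-of-unity filter $\tfrac1n \sum_{j=0}^{n-1} \omega^{-j} f^\lambda(\omega^j)$. Grouping the indices $j$ according to $g = \gcd(j,n)$ and writing $m = n/g$, the weights surviving in each group are primitive $m$-th roots of unity, whose sum is the Ramanujan sum $\sum_{\gcd(k,m)=1} \zeta_m^{-k} = \mu(m)$; hence this count equals $\tfrac1n \sum_{m\mid n} \mu(m)\, f^\lambda(\zeta_m)$ for $\zeta_m$ a primitive $m$-th root of unity. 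Comparing with the previous display, the theorem reduces to the single identity $f^\lambda(\zeta_m) = \chi^\lambda(m^{\,n/m})$ for every $m \mid n$: the value of the fake degree at a primitive $m$-th root of unity equals the value of $\chi^\lambda$ on the regular element of order $m$ in $\SYM_n$, which has cycle type $m^{\,n/m}$.

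I expect this last identity to be the crux; everything around it is bookkeeping. One route is to invoke Springer's theory of regular elements acting on the coinvariant algebra of $\SYM_n$. A more self-contained route starts from the $q$-analogue of the hook-length formula, $f^\lambda(q) = q^{b(\lambda)} \frac{[n]_q!}{\prod_{c\in\lambda}[h(c)]_q}$, and matches its order of vanishing and leading coefficient at $\zeta_m$ against the Murnaghan--Nakayama evaluation of $\chi^\lambda$ on the rectangular cycle type $m^{\,n/m}$; both are governed by the $m$-core and $m$-quotient of $\lambda$, and in particular both vanish exactly when the $m$-core is non-empty. An alternative packaging of the whole argument is to use Klyachko's theorem that $\Lie_n \cong \mathrm{Ind}_{C_n}^{\SYM_n}\psi$ for a faithful linear character $\psi$ of the cyclic group generated by an $n$-cycle, together with the Kra\'skiewicz--Weyman description of multiplicities in such an induced module by the major index---but the content of that description is, once more, the identity above.
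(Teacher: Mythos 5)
The paper does not prove this statement at all: it is quoted as a known theorem of Stanley (in the literature it is the Kra\'skiewicz--Weyman/Klyachko description of $\Lie_d$) and is used only as a black box to generate the collection $\mathbb{L}$ of Lie pieces. So there is no in-paper argument to compare against; your proposal has to stand on its own, and in outline it does. The identity $\mbox{ch}(\Lie_n)=\tfrac1n\sum_{m\mid n}\mu(m)p_m^{n/m}$ is correct and standard, pairing with $s_\lambda$ gives the multiplicity as $\tfrac1n\sum_{m\mid n}\mu(m)\chi^\lambda(m^{n/m})$, and the root-of-unity filter plus the Ramanujan-sum evaluation correctly reduces everything to the single statement $f^\lambda(\zeta_m)=\chi^\lambda(m^{n/m})$ for $m\mid n$ and $\zeta_m$ a primitive $m$-th root of unity. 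Two small caveats. First, in the grouping step you implicitly use that $f^\lambda(\zeta_m^k)$ is independent of the choice of primitive root $\zeta_m^k$; this is harmless exactly because the crux identity holds at every primitive $m$-th root (equivalently, the value is an integer), but it should be said in that order rather than assumed. Second, the crux identity is where all the content lives, and your ``self-contained'' route via the $q$-hook-length formula and Murnaghan--Nakayama is genuinely more delicate than the word ``bookkeeping'' suggests: one must match not only the vanishing criterion (nonempty $m$-core on both sides) but also the sign, which on one side comes from ribbon heights and on the other from the residue of $b(\lambda)$ and the cyclotomic factors surviving in $[n]_q!/\prod_c[h(c)]_q$; that sign comparison is the actual theorem (Springer's regular-element theory, or the Klyachko/Kra\'skiewicz--Weyman induced-character argument you mention). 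If you are content to cite that evaluation, your argument is complete and is essentially the standard proof, repackaged through symmetric functions rather than through $\mathrm{Ind}_{C_n}^{\SYM_n}\psi$ directly; if you want it self-contained, the sign analysis still needs to be written out.
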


\noindent This theorem governs the partitions $\lambda$ appearing in the irreducible decomposition of the Whitehouse modules $\Lie_d$ for all $d>0$. Moreover, it gives the multiplicity with which each partition appears. We collect all such partitions, counted with multiplicity into an (infinite) collection $\mathbb{L}$ of Lie pieces. Consequently, we can use $\mathbb{L}$ to describe the Whitehouse modules $\Lie_d$ and the free Lie algebra $\mathcal{L}_d(V)$.

\begin{equation} \label{EqLiePiecesDecomp}
\Lie_d = \bigoplus_{\substack{ \lambda \vdash d \\ \lambda \in \mathbb{L}} } P_\lambda \qquad \qquad \mathcal{L}_d(V) = \bigoplus_{\substack{ \lambda \vdash d \\ \lambda \in \mathbb{L}} } \SF{\lambda}(V)
\end{equation}

\begin{definition}
A \textbf{Lie piece} is a Young diagram appearing in $\mathbb{L}$.
\end{definition}

\newpage
\begin{remark}~
\begin{enumerate}
\item 
It is important to note that there are duplicates in the collection of Lie pieces. For example, the term $\SF{[3,2,1]}(V)$ appears with multiplicity 3 in $\mathcal{L}_6(V)$, so there are three copies of
\[
\Yvcentermath1
\yng(3,2,1)
\]
in the collection of Lie pieces.

\item
It will be convenient in what follows to fix, once and for all, an order on $\mathbb{L}$. We order the pieces first in increasing size order. If Lie pieces are of the same size then we order the partitions lexicographically (lex order), putting those partitions with largest lex order first. We list the first few terms in $\mathbb{L}$.

\[
\Yvcentermath1
\begin{array}{c|ccccccccc}
\mbox{Index} & 1 & 2 & 3 & 4 & 5 & 6 & 7 & 8\\
& & & & & & & &\\
\mbox{Lie pieces} & \yng(1) & \yng(1,1) & \yng(2,1) & \yng(3,1) & \yng(2,1,1) & \yng(4,1) & \yng(3,2) & \yng(3,1,1)
\end{array}
\]


\end{enumerate}
\end{remark}

\noindent The free Lie algebra is an infinite-dimensional vector space, a fact which does not lend itself well to the kinds of finite computation we are interested in here. In practice we therefore work with a truncated, finite-dimensional piece of the free Lie algebra.

\begin{definition}(Truncation.)
The truncation (of degree $d$) of the free Lie algebra is,
\[
\mathcal{L}_{\leq d}(V) := \bigoplus_{i \leq d} \mathcal{L}_i(V).
\]
The truncation of Lie pieces, denoted $\mathbb{L}_{\leq d}$, is the subcollection of $\mathbb{L}$ consisting of Young diagrams with size at most $d$.
\end{definition}

\begin{remark}
The truncation $\mathcal{L}_{\leq d}(V)$ is also known as the \emph{free} $d$-step nilpotent Lie algebra on $V$.
\end{remark}

\begin{remark} \label{RemarkLieGrowsFast}
We point out that the number of Lie pieces in $\mathbb{L}_{\leq d}$ grows rapidly as a function of $d$. Here are the sizes of the first ten truncations.
\[
\begin{array}{c|cccccccccc}
d & 1 & 2 & 3 & 4 & 5 & 6 & 7 & 8 & 9 & 10 \\\hline |\mathbb{L}_{\leq d}| & 1 & 2 & 3 & 5 & 10 & 22 & 55 & 149 & 439 & 1388
\end{array}
\]
It is the rapid growth indicated here that causes the dramatic slowdown in computing $c_{\lambda\mu}$ for partitions $\lambda, \mu$ of large degree (see (\ref{EqTooManyComputes}) for example).
\end{remark}

\subsection{$\mu$-decompositions} \label{SubSecMuDecomposition}

\begin{definition} \label{DefMuDecomp}
Let $\mu$ be a partition. A \textbf{$\mu$-decomposition} is a collection of (not necessarily distinct) partitions $(\mu_1, \ldots, \mu_k)$ such that,
\[
|\mu| = |\mu_1| + \cdots + |\mu_k|.
\]
We consider two $\mu$-decompositions $(\mu_1, \ldots, \mu_k), (\mu'_1, \ldots, \mu'_l)$ equivalent if $k=l$ and there exists some permutation of the indices $\sigma \in \SYM_k$ such that the ordered collections agree:
\[
(\mu'_1, \ldots, \mu'_k) = (\mu_{\sigma(1)}, \ldots, \mu_{\sigma(k)}).
\]
We tacitly impose this equivalence relation, and choose representatives of equivalence classes as those $\mu$-decompositions $(\mu_1, \ldots, \mu_k)$ where $|\mu_i| \geq |\mu_{i+1}|$, and if $|\mu_i| = |\mu_{i+1}|$ we order them lexicographically.
\end{definition}

\subsubsection{Iterated Littlewood-Richardson coefficients}
\begin{definition}
The iterated Littlewood-Richardson coefficient $L^\mu_{\mu_1\cdots\mu_k}$ of a partition $\mu$ and a $k$-tuple of partitions $(\mu_1, \ldots, \mu_k)$ is defined, for $k>2$, in terms of usual Littlewood-Richardson coefficients as,
\begin{equation}
L^\mu_{\mu_1\cdots\mu_k} := \sum_{\nu_1, \ldots, \nu_{k-2}}L^\mu_{\mu_1\nu_1}L^{\nu_1}_{\mu_2\nu_2}\cdots L^{\nu_{k-2}}_{\mu_{k-1}\mu_{k}},
\end{equation}
where $\nu_i$ are partitions with sizes given below:
\begin{enumerate}
\item $|\nu_1| = |\mu|-|\mu_1|$
\item $|\nu_i|=|\nu_{i-1}|-|\mu_i|$ for $2 \leq i\leq k-2$
\end{enumerate}
For convenience we extend the definition to collections of  size $k = 1, 2$ by declaring that the coefficient $L^\mu_{\mu_1\mu_2}$ is the usual Littlewood-Richardson coefficient, and that the coefficient $L^\mu_{\mu_1}$ is the indicator function on the partition $\mu$.
\end{definition}

\begin{definition}
We say a $\mu$-decomposition $(\mu_1, \ldots, \mu_k)$ is \textbf{good} if $L^\mu_{\mu_1\cdots\mu_k} > 0$.
\end{definition}

\noindent There is a recursive algorithm computing these iterated Littlewood-Richardson coefficients, and thus determining if a given $\mu$-partition is good.\\

\begin{minipage}{.9\linewidth}

\begin{algorithm}[H]

\label{AlgIterLR}

\SetKwFunction{IterLR}{\textsc{iter$\_$lr}}

\SetAlgoLined


\BlankLine

\KwInput{A partition $\mu$ and an array $D$ of $k$ partitions $[\mu_1, \ldots, \mu_k]$.}

\BlankLine

\KwResult{ Return the iterated Littlewood-Richardson coefficient $L^\mu_{\mu_1\cdots\mu_k}$.}

\BlankLine

\Indm\nonl\IterLR{$\mu,D$}\\
\Indp

\uIf{length $D = 1$}{

$p \gets D[0]$

\Return Indicator $\mathbb{I}_\mu(p)$
}

\BlankLine

\uElseIf{length $D = 2$}{

$p, q \gets D[0], D[1]$

\Return $L^\mu_{p, q}$
}

\BlankLine

\Else{
$p \gets D[0]$

$ m \leftarrow |\mu| - |p|$ \;

$ c \leftarrow 0$ \;

\For{$\nu \in \{ \nu \vdash m: \nu \subseteq \mu \}$}{

$l = L^{\mu}_{p\nu}$\;

\uIf{$l > 0$}{
x = \IterLR{$\nu, D[1: ]$}

$c \leftarrow c + l\ast x$
}

}
\Return $c$
}

\caption{Iterated Littlewood-Richardson Coefficient}

\end{algorithm}

\end{minipage}

\subsection{Assembly} \label{SubSecAssembly}

We now describe assembly; the process by which partitions $\lambda$ are constructed from a $\mu$-decomposition and a tuple of Lie pieces.\\

\begin{definition}
A \textbf{pairing} of a $\mu$-decomposition $D_\mu = (\mu_1, \ldots, \mu_k)$ is a collection of $k$ distinct\footnote{Distinct indices of Lie pieces, as opposed to distinct partitions. The distinction is important as there are multiplicities $>1$ appearing in the decomposition of the free Lie algebra.} Lie pieces $L = (l_{i_1}, \ldots, l_{i_k})$ together with a bijection $\phi$ on the indices of $D_\mu$ and of $L$.
\end{definition}

\noindent For clarity, we consider \emph{straightening} the pairing by relabelling the Lie pieces according to the bijection $\phi$ so that $\mu_j$ is paired with $l_{i_j}$. We denote such a pairing by
\[
(\mu_1, \ldots, \mu_k) \smile (l_{i_1}, \ldots, l_{i_k}).
\]
We depict a pairing, together with its straightened counterpart in Fig. \ref{FigAssembly} below.

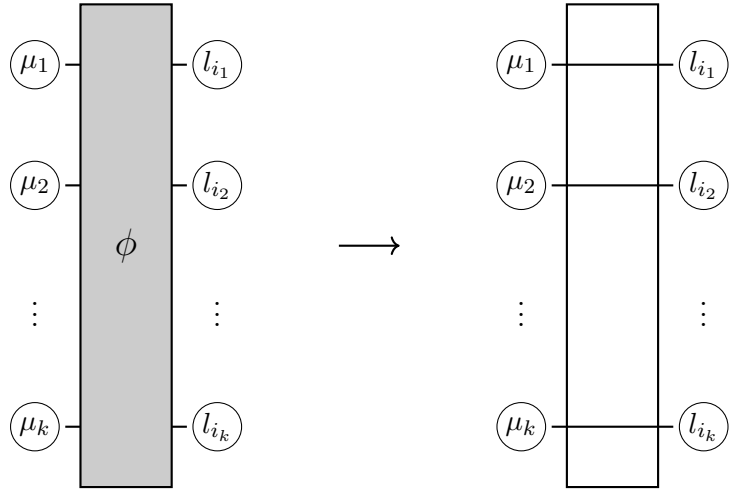
\begin{figure}[H]
\begin{center}
\begin{tikzpicture}[scale=0.8]
\Yvcentermath1

\draw (0,6) circle (0.4) node {$\mu_1$};
\draw (0,4) circle (0.4) node {$\mu_2$};
\node at (0,2)  {$\vdots$};
\draw (0,0) circle (0.4) node {$\mu_k$};

\draw (3,6) circle (0.4) node {$l_{i_1}$};
\draw (3,4) circle (0.4) node {$l_{i_2}$};
\node at (3,2)  {$\vdots$};
\draw (3,0) circle (0.4) node {$l_{i_k}$};

\draw[thick] (0.5,6) -- (2.5,6);
\draw[thick] (0.5,4) -- (2.5,4);
\draw[thick] (0.5,0) -- (2.5,0);

\draw[thick, fill=black!20]  (0.75,-1) -- (2.25,-1) -- (2.25,7) -- (0.75,7) -- cycle;
\node at (1.5, 3) {\Large $\phi$};

\draw[->, thick] (5, 3) --  (6,3);

\begin{scope}[xshift=8cm]
\draw (0,6) circle (0.4) node {$\mu_1$};
\draw (0,4) circle (0.4) node {$\mu_2$};
\node at (0,2)  {$\vdots$};
\draw (0,0) circle (0.4) node {$\mu_k$};

\draw (3,6) circle (0.4) node {$l_{i_1}$};
\draw (3,4) circle (0.4) node {$l_{i_2}$};
\node at (3,2)  {$\vdots$};
\draw (3,0) circle (0.4) node {$l_{i_k}$};

\draw[thick] (0.5,6) -- (2.5,6);
\draw[thick] (0.5,4) -- (2.5,4);
\draw[thick] (0.5,0) -- (2.5,0);

\draw[thick]  (0.75,-1) -- (2.25,-1) -- (2.25,7) -- (0.75,7) -- cycle;
\end{scope}

\end{tikzpicture}
\caption{On the left we depict a pairing of $(\mu_1, \ldots, \mu_k)$ with a collection of Lie pieces $(l_{i_1}, \ldots, l_{i_k})$. On the right is the straightened version of this pairing, with the indices of the Lie pieces shuffled and relabelled according to the bijection $\phi$.}
\label{FigAssembly}
\end{center}
\end{figure}

\noindent We are now ready to describe the assembly of a (straightened) pairing.

\begin{definition}
An \textbf{assembly}\footnote{Here both senses of the word are employed. On the one hand, we think of \emph{assembling} two collections of partitions, and on the other we think of the \emph{assembled} collection of partitions that arise from the construction.} of a (straightened) pairing,
\[
(\mu_1, \ldots, \mu_k) \smile (l_{i_1}, \ldots, l_{i_k}),
\]
is the collection of partitions arising in,
\begin{equation} \label{EqAssembly}
(\mu_1 \odot l_{i_1}) \otimes (\mu_2 \odot l_{i_2}) \otimes \cdots \otimes (\mu_k \odot l_{i_k}
). \end{equation}
We denote this assembly by,
\[
(\mu_1, \ldots, \mu_k) \oast (l_{i_1}, \ldots, l_{i_k}).
\]
\end{definition}

\begin{remark}
The expression (\ref{EqAssembly}) is where a lot of the work is being done in computing $c_{\lambda\mu}$. Here we iteratively apply plethysms and tensor products of various partitions. When our partitions are relatively small, this can be done quickly, but as our partitions become large enough it becomes infeasible. There is no getting around this fact, and so our goal is to make the minimal number of applications of $\oast$ as possible.
\end{remark}

\noindent The following result forms the basis of our approach to computing $c_{\lambda\mu}$.

\begin{lemma} \label{LemmaAssemblyNoDec}
Fix a partition $\mu \vdash m$, and a $\mu$-decomposition $(\mu_1, \ldots, \mu_k)$. Then any assembly with $(\mu_1, \ldots, \mu_k)$ consists of partitions of size at least $m$. Moreover, if,
\[
(\mu_1, \ldots, \mu_k) \smile (l_{i_1}, \ldots, l_{i_k})
\]
is a pairing, then every partition appearing in its assembly is of size,
\[
|\mu_1|\cdot|l_{i_1}| + \ldots |\mu_k|\cdot|l_{i_k}|.
\]
\end{lemma}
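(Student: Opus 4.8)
The plan is to unwind the two size-tracking facts that are already in play: plethysm multiplies sizes (Remark following the definition of $\odot$) and the tensor product adds sizes (Remark following the definition of $\otimes$), and then combine them along the definition of assembly. I would start with the second assertion, since the first follows from it. Fix a pairing $(\mu_1, \ldots, \mu_k) \smile (l_{i_1}, \ldots, l_{i_k})$. By the size-multiplicativity of plethysm, every partition appearing in $\mu_j \odot l_{i_j}$ has size exactly $|\mu_j|\cdot|l_{i_j}|$; in particular $\mu_j \odot l_{i_j}$ is a sum of partitions all of the same size, namely $|\mu_j|\cdot|l_{i_j}|$. Then, since the tensor product of partitions adds sizes, every partition appearing in the iterated tensor product $(\mu_1\odot l_{i_1})\otimes\cdots\otimes(\mu_k\odot l_{i_k})$ has size equal to the sum $|\mu_1|\cdot|l_{i_1}| + \cdots + |\mu_k|\cdot|l_{i_k}|$. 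This is precisely the collection of partitions denoted $(\mu_1,\ldots,\mu_k)\oast(l_{i_1},\ldots,l_{i_k})$, so the "moreover" claim holds. A small point to note is that a Lie piece is itself a Young diagram of some size $|l_{i_j}|\geq 1$, so each factor is nonempty and the size formula makes sense.

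For the first assertion, observe that an \emph{arbitrary} assembly with $(\mu_1,\ldots,\mu_k)$ comes from \emph{some} pairing, i.e. some choice of $k$ distinct Lie pieces and some bijection; after straightening we may as well take it in the form above. Each $|l_{i_j}| \geq 1$ since every Lie piece is a nonempty partition (the smallest Lie piece is the single box, of size $1$). Hence
\[
|\mu_1|\cdot|l_{i_1}| + \cdots + |\mu_k|\cdot|l_{i_k}| \;\geq\; |\mu_1| + \cdots + |\mu_k| \;=\; m,
\]
using the defining property of a $\mu$-decomposition in the last equality. By the "moreover" part just established, every partition in the assembly has size equal to the left-hand side, hence size at least $m$. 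This gives the first claim.

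There is no real obstacle here; the statement is essentially a bookkeeping consequence of the two size rules, and the only thing to be slightly careful about is making the quantifiers precise — "any assembly" means "for every pairing, every partition occurring in the resulting tensor product" — and recording that Lie pieces have size at least $1$ (equivalently $|l_{i_j}|\geq 1$), which is what upgrades the trivial equality $\sum|\mu_j| = m$ into the desired inequality $\sum|\mu_j|\cdot|l_{i_j}| \geq m$. If one wanted to be scrupulous about the iterated tensor product, one could induct on $k$: the case $k=1$ is immediate, and the inductive step uses that $\nu \otimes (\mu_k \odot l_{i_k})$ has all parts of size $|\nu| + |\mu_k|\cdot|l_{i_k}|$ whenever $\nu$ ranges over partitions of a fixed size, which is exactly the additivity of $\otimes$.
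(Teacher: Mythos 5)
Your proof is correct and follows the same route as the paper: the paper's own proof simply notes that the second claim is a direct consequence of the definition of assembly (plethysm multiplies sizes, tensor product adds them) and that the first claim follows from the second, which is exactly the argument you spell out, including the observation that $|l_{i_j}|\geq 1$ turns $\sum_j |\mu_j| = m$ into the lower bound. Your version is just a more detailed write-up of the same bookkeeping.
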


\begin{proof}
The first statement follows immediately from the second. The second is a straightforward consequence of the definition of an assembly as a sequence of plethysms and tensor products.
\end{proof}

\noindent In light of this lemma we make the following definition.

\begin{definition}
We say an assembly $(\mu_1, \ldots, \mu_k) \oast (l_{i_1}, \ldots, l_{i_k})$ has \textbf{target-size},
\[
|\mu_1|\cdot|l_{i_1}| + \ldots |\mu_k|\cdot|l_{i_k}|.
\]
\end{definition}

\begin{example}
We are ready to give an example of a solution to a decomposition puzzle. Let $\mu = [2,1]$. Then an example of a good $\mu$-decomposition is,
\[
\Yvcentermath1
 \mu_1 ~=~ \yng(2) \qquad \mu_2 ~=~ \yng(1)
\]
An example of a (straightened) pairing of this $\mu$-decomposition is,
\[
\Yvcentermath1
 l_2 ~=~ \yng(1,1) \qquad l_1 ~=~ \yng(1)
\]
We compute the corresponding assembly of $(\mu_1, \mu_2) \smile (l_2, l_1)$.
\[
\Yvcentermath1
\mu_1 \odot l_2 ~=~ \yng(2) ~\odot~ \yng(1,1) ~=~ \yng(1,1,1,1) ~\oplus~ \yng(2,2) \qquad \qquad
\mu_2 \odot l_1 ~=~ \yng(1) ~\odot~ \yng(1) ~=~ \yng(1)
\]
\[
\Yvcentermath1
(\mu_1, \mu_2) \oast (l_2, l_1) ~=~  \left(~ \yng(1,1,1,1) ~\oplus~ \yng(2,2)~ \right) ~\otimes~ \yng(1) ~=~ \yng(1,1,1,1,1) ~\oplus~ \yng(2,1,1,1) ~\oplus~ \yng(2,2,1) ~\oplus~ \yng(3,2)
\]
Observe that all partitions appearing in $(\mu_1, \mu_2) \oast (l_2, l_1)$ are of size
\[
5 = |\mu_1| \times |l_2| + |\mu_2| \times |l_1|.
\]
Therefore this assembly has target-size 5. The corresponding paths in the tree in Fig. \ref{FigDecompositionTree} are shown below.

\begin{center}
\begin{tikzpicture}
[level distance=2cm,
level 1/.style={sibling distance=1.5cm},
level 2/.style={sibling distance=1cm},
level 3/.style={sibling distance=1.5cm, level distance=2.5cm}]

\Yvcentermath1

\tikzstyle{every node}=[draw]

\node [shape=circle] (Root) {$\tiny\yng(2,1)$}

child {
	node [shape=rectangle, minimum height=0.9cm] {$\tiny\yng(2) ~\oplus~ \yng(1)~$}
	child {
		node [shape=ellipse] {$\tiny\yng(1,1) ~\oplus~ \yng(1)$}
		child {
			node [draw=none] {$\tiny \yng(1,1,1,1,1)$}
		}
		child {
			node [draw=none] {$\tiny \yng(2,1,1,1)$}
		}
		child {
			node [draw=none] {$\tiny \yng(2,2,1)$}
		}
		child {
			node [draw=none] {$\tiny \yng(3,2)$}
		}
	}
};

\end{tikzpicture}
\end{center}

\end{example}

\noindent We can now formally describe the decomposition puzzle and their solutions.

\begin{definition}
A solution $\mathbf{s}$ to a $(\mu, \lambda)$ \textbf{decomposition puzzle} is a pairing,
\[
(\mu_1, \ldots, \mu_k) \smile (l_{i_1}, \ldots, l_{i_k})
\]
of a good $\mu$-decomposition such that $\lambda$ appears in $(\mu_1, \ldots, \mu_k) \oast (l_{i_1}, \ldots, l_{i_k})$.\\

\begin{definition}
We say a solution \textbf{contributes},
\[
\Contrib(\mathbf{s}):= \alpha\cdot\beta,
\]
where $\alpha$ is the iterated Littlewood-Richardson coefficient $L^\mu_{\mu_1\cdots\mu_k}$ and $\beta$ is the multiplicity with which $\lambda$ appears in the assembly $(\mu_1, \ldots, \mu_k) \oast (l_{i_1}, \ldots, l_{i_k})$.
\end{definition}

\noindent Let $\Sigma = \Sigma_{(\mu, \lambda)}$ denote the set of all distinct solutions to $(\mu, \lambda)$ decomposition puzzles.
\end{definition}

\begin{theorem} \label{TheoremPuzzles}
The coefficient $c_{\lambda\mu}$ is the weighted sum of all solutions to $(\mu, \lambda)$ decomposition puzzles. That is,
\[
c_{\lambda\mu} = \sum_{\mathbf{s} \in \Sigma_{(\mu, \lambda)} } \Contrib(\mathbf{s}).
\]
\end{theorem}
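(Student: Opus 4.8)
The plan is to expand $\SF{\mu}(\mathcal{L}(V))$ along the two structural features baked into the definition of a decomposition puzzle: first the branching of the Schur functor $\SF{\mu}$ over the direct-sum decomposition of $\mathcal{L}(V)$ (which produces the $\mu$-decompositions and the iterated Littlewood--Richardson coefficients), and second the replacement of each graded piece $\mathcal{L}_d(V)$ by its Lie-piece decomposition (\ref{EqLiePiecesDecomp}) (which produces the pairings). Extracting the multiplicity of $\SF{\lambda}(V)$ from the resulting expression will give exactly the weighted count $\sum_{\mathbf{s}} \Contrib(\mathbf{s})$.

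Concretely, I would proceed as follows. First, recall the standard branching rule for a Schur functor applied to a direct sum $W = \bigoplus_i W_i$: one has $\SF{\mu}(W) \cong \bigoplus \left( \bigotimes_i \SF{\mu_i}(W_i) \right)^{\oplus \langle \cdot \rangle}$, where the sum runs over tuples $(\mu_i)$ and the multiplicity is the coefficient of $s_\mu$ in $\prod_i s_{\mu_i}$, i.e. exactly the iterated Littlewood--Richardson coefficient $L^\mu_{\mu_1\cdots\mu_k}$ of Definition (iterated LR). Applying this with $W = \mathcal{L}(V) = \bigoplus_d \mathcal{L}_d(V)$ shows that $\SF{\mu}(\mathcal{L}(V))$ is a sum, over good $\mu$-decompositions $(\mu_1,\dots,\mu_k)$ (grouped by which graded piece each $\mu_j$ lands in), of $L^\mu_{\mu_1\cdots\mu_k}$ copies of $\bigotimes_j \SF{\mu_j}(\mathcal{L}_{d_j}(V))$. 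Second, substitute $\mathcal{L}_{d_j}(V) = \bigoplus_{l \in \mathbb{L},\, |l| = d_j} \SF{l}(V)$ from (\ref{EqLiePiecesDecomp}) and expand each $\SF{\mu_j}$ over this sum again by the branching rule; here $\SF{\mu_j}\!\big(\bigoplus_l \SF{l}(V)\big)$ contributes $\bigotimes_l \SF{\mu_j^{(l)}}(\SF{l}(V))$, and since each $\SF{l}(V)$ is itself irreducible $\GL(V)$, the only way to land a single Schur functor is to assign all of $\mu_j$ to one Lie piece $l_{i_j}$, giving the term $\SF{\mu_j}(\SF{l_{i_j}}(V))$. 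Third, identify $\SF{\mu_j}(\SF{l_{i_j}}(V))$ with the plethysm $\mu_j \odot l_{i_j}$ (by definition of $\odot$) and the iterated tensor product $\bigotimes_j (\mu_j \odot l_{i_j})$ with the partition-level $\otimes$ (by definition of $\otimes$ via Littlewood--Richardson), so that the term attached to a straightened pairing $(\mu_1,\dots,\mu_k) \smile (l_{i_1},\dots,l_{i_k})$ is precisely the assembly $(\mu_1,\dots,\mu_k) \oast (l_{i_1},\dots,l_{i_k})$. Fourth, extract the coefficient of $\SF{\lambda}(V)$: it receives $L^\mu_{\mu_1\cdots\mu_k}$ (the $\alpha$) times the multiplicity of $\lambda$ in the assembly (the $\beta$) from each pairing, which is $\Contrib(\mathbf{s})$, and summing over all pairings of all good $\mu$-decompositions — i.e. over $\mathbf{s} \in \Sigma_{(\mu,\lambda)}$ — yields the claimed formula.

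The main subtlety to get right is the bookkeeping of \emph{distinct} solutions versus the naive expansion. When I expand $\SF{\mu_j}$ over $\mathcal{L}_{d_j}(V) = \bigoplus_{|l|=d_j} \SF{l}(V)$, the branching rule a priori allows $\mu_j$ to split across several Lie pieces of the same total degree $d_j$; I must argue that such a split is not a single assembly term but is already accounted for by refining the $\mu$-decomposition (replacing $\mu_j$ by the pieces of its split, each paired with a distinct Lie piece) — this is exactly why Definition \ref{DefMuDecomp} allows repeated and arbitrarily fine partitions, and why a pairing uses $k$ \emph{distinct} Lie-piece indices. I also need to handle the multiplicities $>1$ in $\mathbb{L}$ carefully: a partition $l$ occurring $m$ times in $\mathcal{L}_d(V)$ contributes $m$ genuinely different summands, matched by the fact that a pairing selects Lie pieces by index rather than by underlying diagram (the footnote to the definition of pairing). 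Finally I would note the convergence/well-definedness point: for fixed $\mu$ and $\lambda$ the sum $\sum_{\mathbf{s}} \Contrib(\mathbf{s})$ is finite because Lemma \ref{LemmaAssemblyNoDec} forces every contributing assembly to have target-size $|\lambda|$, bounding both $k$ and the sizes of the Lie pieces involved, so the infinite direct sum in (\ref{EqCoeffsFromFreeLie}) makes sense term by term. That equivalence-class / multiplicity bookkeeping is the step I expect to require the most care; the rest is a direct unwinding of the definitions of $\otimes$, $\odot$, and the iterated Littlewood--Richardson coefficient together with the classical Schur-functor branching rule.
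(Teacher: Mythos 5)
Your proposal is correct and follows essentially the same route as the paper: iterate the Schur-functor branching rule $\SF{\mu}(A\oplus B)\cong\bigoplus L^{\mu}_{\mu_1\mu_2}\SF{\mu_1}(A)\otimes\SF{\mu_2}(B)$ over the Lie-piece decomposition of $\mathcal{L}(V)$ to write $\SF{\mu}(\mathcal{L}(V))$ as a sum over pairings of $\mu$-decompositions weighted by iterated Littlewood--Richardson coefficients, identify each summand with an assembly, and read off the multiplicity of $\SF{\lambda}(V)$ as $\alpha\cdot\beta$. The only difference is organizational: you expand in two stages (first by graded degree, then over Lie pieces within a degree), which creates the refinement bookkeeping you flag, whereas the paper expands directly over $\mathbb{L}$ in one step; your resolution of that bookkeeping is correct, so the two arguments coincide.
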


\begin{proof}
From (\ref{EqCoeffsFromFreeLie}), we have that,
\[
\SF{\mu}(\mathcal{L}(V)) \cong \bigoplus_{\lambda} c_{\lambda\mu} \SF{\lambda}(V).
\]
A basic property of Schur functors $\SF{\mu}$ is that,

\begin{equation} \label{EqSchurSum}
\SF{\mu}(A \oplus B) \cong \bigoplus_{\mu_1, \mu_2} L^\mu_{\mu_1\mu_2} \SF{\mu_1}(A) \otimes \SF{\mu_2}(B),
\end{equation}

\noindent where $|\mu| = |\mu_1| + |\mu_2|$ (see, for example, \cite{FH}). It follows from our decomposition of the free Lie algebra into its Lie pieces in (\ref{EqLiePiecesDecomp}), and by iterative applications of (\ref{EqSchurSum}), that,

\begin{equation} \label{EqIterLRFreeLie}
\SF{\mu}(\mathcal{L}(V)) \cong \bigoplus L^{\mu}_{\mu_1\cdots\mu_k} \cdot  \SF{\mu_1}(\SF{l_{i_1}}(V) ) \otimes \cdots \otimes \SF{\mu_k}(\SF{l_{i_k}}(V)),
\end{equation}

\noindent where the sum is over all pairings of all $\mu$-decompositions.\\

\noindent By definition the coefficient $c_{\lambda\mu}$ is the multiplicity with which $\SF{\lambda}(V)$ appears in $\SF{\mu}(\mathcal{L}(V))$. Consider a summand appearing in the RHS of (\ref{EqIterLRFreeLie}) indexed by a pairing,
\[
(\mu_1, \ldots, \mu_k) \smile (l_{i_1}, \ldots, l_{i_k}).
\]
This pairing is a solution to a $(\mu, \lambda)$ decomposition puzzle if and only if $\lambda$ appears as a summand in the assembly of the pairing. Moreover, it is easy to see that the multiplicity with which $\lambda$ appears in this summand is precisely the contribution of that solution.
\end{proof}

\noindent We can immediately say something about coefficients $c_{\lambda\mu}$ when $|\lambda| = |\mu|$.


\begin{lemma} \label{LemmaMuSizeLambda}
Let $\lambda, \mu$ partitions such that $|\lambda| = |\mu|$. Then,
\[
c_{\lambda\mu} = \left\{\begin{array}{ccc}1 &  & \lambda = \mu \\0 &  & \mbox{else}\end{array}\right.
\]
\end{lemma}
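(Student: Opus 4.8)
The plan is to use Theorem~\ref{TheoremPuzzles} and the size constraint coming from Lemma~\ref{LemmaAssemblyNoDec}. Suppose $|\lambda| = |\mu| = m$ and consider any solution $\mathbf{s} \in \Sigma_{(\mu,\lambda)}$, given by a pairing $(\mu_1, \ldots, \mu_k) \smile (l_{i_1}, \ldots, l_{i_k})$ of a good $\mu$-decomposition. By Lemma~\ref{LemmaAssemblyNoDec}, every partition in the assembly has size $\sum_j |\mu_j|\cdot|l_{i_j}|$, and since $\lambda$ appears in the assembly we need $\sum_j |\mu_j|\cdot|l_{i_j}| = m = \sum_j |\mu_j|$. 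Because every Lie piece has size $|l_{i_j}| \geq 1$ and every $|\mu_j| \geq 1$, the equality $\sum_j |\mu_j|(|l_{i_j}| - 1) = 0$ forces $|l_{i_j}| = 1$ for all $j$. There is a unique Lie piece of size $1$, namely the single box $\yng(1)$ (i.e. $\mathcal{L}_1(V) = \yng(1)$), so the pairing must use $k$ copies of that piece — but pairings require \emph{distinct} indices of Lie pieces, and there is only one index of size $1$. Hence $k = 1$.

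So the only candidate solutions are pairings of the one-part $\mu$-decomposition $(\mu_1) = (\mu)$ (which is good, since $L^\mu_\mu = 1$) with the single Lie piece $l = \yng(1)$ of size $1$. First I would record that with $k=1$ the assembly is just $\mu_1 \odot l = \mu \odot \yng(1)$, and since $\SF{\mu}(\SF{[1]}(V)) = \SF{\mu}(V)$ we get $\mu \odot \yng(1) = \mu$ as a single partition with multiplicity $1$. Thus $\lambda$ appears in this assembly if and only if $\lambda = \mu$, in which case $\beta = 1$; and the iterated Littlewood--Richardson coefficient is $\alpha = L^\mu_\mu = \mathbb{I}_\mu(\mu) = 1$. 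Therefore $\Sigma_{(\mu,\lambda)}$ is empty when $\lambda \neq \mu$, giving $c_{\lambda\mu} = 0$, and consists of the single solution $\mathbf{s}$ with $\Contrib(\mathbf{s}) = 1$ when $\lambda = \mu$, giving $c_{\mu\mu} = 1$.

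The step I expect to require the most care is the reduction to $k = 1$: one must be careful that the size bookkeeping is airtight (every $|\mu_j| \geq 1$ because partitions in a $\mu$-decomposition are genuine partitions, hence nonempty), and one must invoke the distinctness-of-indices condition in the definition of a pairing to rule out using the size-$1$ Lie piece more than once — this is the only place the ``distinct indices'' subtlety actually bites. Everything after that is a direct unwinding of definitions. As a sanity check one can note this is also immediate from the representation-theoretic side: the degree-$m$ part of $\SF{\mu}(\mathcal{L}(V))$ can only come from the degree-$1$ part $\mathcal{L}_1(V) = V$ of $\mathcal{L}(V)$, and $\SF{\mu}(V)$ contributes exactly $\SF{\mu}(V)$ in that degree, matching $c_{\lambda\mu} = \delta_{\lambda\mu}$.
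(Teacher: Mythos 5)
Your proof is correct and takes essentially the same approach as the paper's: invoke Lemma~\ref{LemmaAssemblyNoDec} to force every Lie piece in a solution to have size $1$, use the fact that there is only one size-$1$ Lie piece to conclude $k=1$, and observe that the unique length-one $\mu$-decomposition is $\mu$ itself. Your write-up merely makes explicit the distinct-indices point and the $\alpha=\beta=1$ bookkeeping that the paper leaves implicit.
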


\begin{proof}
Let $(\mu_1, \ldots, \mu_k)$ be a $\mu$-decomposition. By Lemma \ref{LemmaAssemblyNoDec}, the size of partitions in an assembly is,
\[
|\mu_1|\cdot|l_{i_1}| + \cdots + |\mu_k|\cdot|l_{i_k}|.
\]
Furthermore, we have that $|\mu_1| + \cdots + |\mu_k| = |\mu|$. Observe that there is only one Lie piece of size 1, namely,
\[
\Yvcentermath1
l_1 = \tiny\yng(1) ~,
\]
and so the only way to obtain partitions of size $|\mu|$ in the assembly is if $k=1$ and $l_{i_1}=l_1$. There is only one $\mu$-decomposition of length $1$, $\mu$ itself! The result follows.
\end{proof}

\noindent With this result in hand we have a potential strategy for computing the coefficients $c_{\lambda\mu}$, namely, enumerate all possible solutions to $(\mu, \lambda)$ decomposition puzzles. The problem, as we outline below, is that the naive approach is computationally infeasible. In the next section we highlight the source of this infeasibility, and provide a workaround that considers the \emph{shape} of a decomposition.

\subsection{Shape analysis} \label{SubSecShapeAnal}

Before we define the shape of a decomposition, we outline the the problem it seeks to address. Fix partitions $\mu, \lambda$. By Theorem \ref{TheoremPuzzles}, our strategy for computing $c_{\lambda\mu}$ is to find all solutions to $(\mu, \lambda)$ decomposition puzzles. Fix a $\mu$-decomposition $(\mu_1, \ldots, \mu_k)$. \emph{A priori}, finding corresponding solutions involves checking the assemblies of all pairings $(l_{i_1}, \ldots, l_{i_k})$ in $\mathbb{L}$. As stated this problem is not even finite! Of course, we don't need to consider all of $\mathbb{L}$. By Lemma \ref{LemmaAssemblyNoDec} we need only consider Lie parts of size at most $|\lambda| = d$, so we can restrict our search to the truncation $\mathbb{L}_{\leq d}$.\\

\noindent Our problem is now finite, but it is too large! Indeed, we are left to check all possible ordered $k$-tuples in $\mathbb{L}_{\leq d}$. For each such pairing we form an assembly, which involves computing $k$ plethysms and $(k-1)$ tensor products. All together, the number of computations for the single $\mu$-decomposition $(\mu_1, \ldots, \mu_k)$ is

\begin{equation} \label{EqTooManyComputes}
\mathcal{O}\left( \frac{f(d)!}{(f(d)-k)!} \cdot k^2 \right),
\end{equation}
where $f: \mathbb{N} \rightarrow \mathbb{N}$ is the function taking $d \mapsto |\mathbb{L}_{\leq d}|$.

\begin{remark} \label{RemarkTwoProblems}
There are two major problems with (\ref{EqTooManyComputes}).

\begin{enumerate}
\item \label{ItemPleth}
(\ref{EqTooManyComputes}) represents the number of plethysm and tensor products that need to be computed - and these operations (especially the plethysm) are computationally expensive.

\item \label{ItemExplode}
The function $f$ grows very quickly (see Remark \ref{RemarkLieGrowsFast}), causing the factorial expression to explode.
\end{enumerate}
\end{remark}

\noindent We address each of these points in turn in the next two sections.

\subsubsection{Avoid unnecessary plethysms and tensor products} \label{SubSubSecAvoidPleth}

The following proposition follows immediately from Lemma \ref{LemmaAssemblyNoDec} and provides a workaround to Remark \ref{RemarkTwoProblems} (\ref{ItemPleth}).

\begin{proposition}
Fix partitions $\mu, \lambda$. If $\mathbf{s} = (\mu_1, \ldots, \mu_k) \smile (l_{i_1}, \ldots, l_{i_k})$ is a solution to the $(\mu, \lambda)$ decomposition puzzle, then,

\begin{equation}\label{EqTargetSizeIsLambda}
|\mu_1| \cdot |l_{i_1}| + \cdots + |\mu_k| \cdot |l_{i_k}| = |\lambda|.
\end{equation}

\end{proposition}

\noindent Notice that this condition can be checked \emph{without computing plethysms or tensor products}. Our modified strategy therefore is only to check assemblies of pairings for which (\ref{EqTargetSizeIsLambda}) holds.

\begin{definition}
The \textbf{shape} of $\mu$-decomposition $(\mu_1, \ldots, \mu_k)$ is the partition $\theta \vdash |\mu|$ with parts given by the sizes of its constituent partitions $\mu_j$. That is,
\[
\theta = (|\mu_1|, |\mu_2|, \ldots, |\mu_k|)
\]
(possibly after reordering). See Fig. \ref{FigPairings}.
\end{definition}

\noindent The figure below depicts the simplification this analysis affords us.

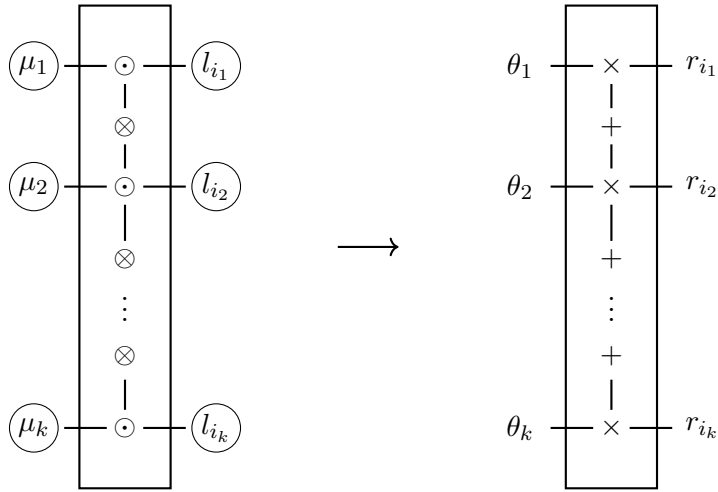
\begin{figure}
\begin{center}
\begin{tikzpicture}[scale=0.8]
\Yvcentermath1

\draw (0,6) circle (0.4) node {$\mu_1$};
\draw (0,4) circle (0.4) node {$\mu_2$};
\draw (0,0) circle (0.4) node {$\mu_k$};

\draw (3,6) circle (0.4) node {$l_{i_1}$};
\draw (3,4) circle (0.4) node {$l_{i_2}$};
\draw (3,0) circle (0.4) node {$l_{i_k}$};

\draw[thick] (0.5,6) -- (1.2,6);
\draw[thick] (1.8,6) -- (2.5,6);
\node at (1.5, 6) {$\odot$};

\draw[thick] (0.5,4) -- (1.2,4);
\draw[thick] (1.8,4) -- (2.5,4);
\node at (1.5, 4) {$\odot$};

\draw[thick] (0.5,0) -- (1.2,0);
\draw[thick] (1.8,0) -- (2.5,0);
\node at (1.5, 0) {$\odot$};

\draw[thick] (1.5, 5.7) -- (1.5, 5.3);
\draw[thick] (1.5, 4.7) -- (1.5, 4.3);
\node at (1.5, 5) {$\otimes$};

\draw[thick] (1.5, 3.7) -- (1.5, 3.1);
\node at (1.5, 2.8) {$\otimes$};

\node at (1.5,2.1)  {$\vdots$};

\draw[thick] (1.5, 0.8) -- (1.5, 0.3);
\node at (1.5, 1.2) {$\otimes$};

\draw[thick]  (0.75,-1) -- (2.25,-1) -- (2.25,7) -- (0.75,7) -- cycle;

\draw[->, thick] (5, 3) --  (6,3);

\begin{scope}[xshift=8cm]
\node at (0,6) {$\theta_1$};
\node at (0,4)  {$\theta_2$};
\node at (0,0) {$\theta_k$};

\node at (3,6)  {$r_{i_1}$};
\node at (3,4)  {$r_{i_2}$};
\node at (3,0)  {$r_{i_k}$};

\draw[thick] (0.5,6) -- (1.2,6);
\draw[thick] (1.8,6) -- (2.5,6);
\node at (1.5, 6) {$\times$};

\draw[thick] (0.5,4) -- (1.2,4);
\draw[thick] (1.8,4) -- (2.5,4);
\node at (1.5, 4) {$\times$};

\draw[thick] (0.5,0) -- (1.2,0);
\draw[thick] (1.8,0) -- (2.5,0);
\node at (1.5, 0) {$\times$};

\draw[thick] (1.5, 5.7) -- (1.5, 5.3);
\draw[thick] (1.5, 4.7) -- (1.5, 4.3);
\node at (1.5, 5) {$+$};

\draw[thick] (1.5, 3.7) -- (1.5, 3.1);
\node at (1.5, 2.8) {$+$};

\node at (1.5,2.1)  {$\vdots$};

\draw[thick] (1.5, 0.8) -- (1.5, 0.3);
\node at (1.5, 1.2) {$+$};

\draw[thick]  (0.75,-1) -- (2.25,-1) -- (2.25,7) -- (0.75,7) -- cycle;
\end{scope}

\end{tikzpicture}
\caption{On the left we depict a (straightened) pairing of $(\mu_1, \ldots, \mu_k)$ with a collection of Lie pieces $(l_{i_1}, \ldots, l_{i_k})$. On the right is the associated shape partition $\theta$ together with the sizes $r_i$ corresponding to the Lie pieces $l_i$. Notice that plethysms and tensor products on the LHS become multiplications and additions on the RHS (resp.).}
\label{FigPairings}
\end{center}
\end{figure}

\paragraph{Strategy.} Our strategy will be to restrict attention to those pairings satisfying (\ref{EqTargetSizeIsLambda}). We describe the algorithm producing such pairings in Algorithm \ref{AlgBuildInstructions}. Observe that it is possible for two different $\mu$-decompositions to have the same shape. It is therefore more efficient to find solutions to (\ref{EqTargetSizeIsLambda}) among the set of shapes, and to cache these solutions in a hash table,

\begin{equation} \label{EqLookup}
\{ \mbox{shape} : \mbox{indices of Lie pieces} \}.
\end{equation}

\noindent This strategy means we only compute tensor products and plethysms when their target-size is valid. It therefore addresses Remark \ref{RemarkTwoProblems} (\ref{ItemPleth}), as promised.

\begin{example} \label{ExampleSavings}
To illustrate the scale of savings this makes; when $k=3$ and $d=9$, the number of pairings of target size 9 is $148$, whereas the number of possible $3$-element subsets of $\mathbb{L}_{\leq 9}$ is 84027234. Of course as $d$ increases and as $k$ increases this difference only increases!
\end{example}

\subsubsection{Improved upper bound on the size of Lie pieces}

We now address the second problem, Remark \ref{RemarkTwoProblems} (\ref{ItemExplode}). Recall that the source of this problem was that the number of Lie pieces of size $\leq d$ grows very quickly as a function of $d$. Our strategy is to find an improved upper bound on the truncation of Lie pieces.

\begin{definition}
Fix a shape $\theta = (\theta_1, \ldots, \theta_k)$ and a degree $d \in \mathbb{N}$. Define $\varphi = \varphi(\theta, d) \in \mathbb{N}$ by,
\[
\varphi =  \left\lfloor \frac{d - (\theta_1 \cdot r_1 + \cdots + \theta_{k-1} \cdot r_{k-1})}{\theta_k} \right\rfloor
\]
where $r_i = |l_i|$ is the size of the $i$-th Lie piece. 
\end{definition}

\begin{lemma}
Fix partitions $\mu, \lambda$ and a $\mu$-decomposition of shape $\theta$. Let $\varphi = \varphi(\theta, |\lambda|)$. Then any solution to the $(\mu, \lambda)$ decomposition puzzle involving this $\mu$-decomposition can be found in the truncation,
\[
\mathbb{L}_{\leq \varphi}.
\]
\end{lemma}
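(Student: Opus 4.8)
The plan is to prove the apparently stronger statement that \emph{every} Lie piece appearing in such a solution has size at most $\varphi$, which is exactly what ``can be found in $\mathbb{L}_{\leq\varphi}$'' means. So fix a solution $(\mu_1,\dots,\mu_k)\smile(l_{i_1},\dots,l_{i_k})$ of the given $\mu$-decomposition, write $r_m = |l_m|$ as in the definition of $\varphi$, and let $S=\max_j |l_{i_j}|$, attained at some index $j_\ast$. Since $S$ is a positive integer, it suffices to show
\[
\theta_k\cdot S \;\leq\; |\lambda| - \bigl(\theta_1 r_1 + \cdots + \theta_{k-1} r_{k-1}\bigr),
\]
for then $S \leq (|\lambda| - \sum_{m<k}\theta_m r_m)/\theta_k$ and we may take the floor.

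First I would invoke the target-size identity (\ref{EqTargetSizeIsLambda}): in any solution, $\sum_{j=1}^{k} |\mu_j|\cdot|l_{i_j}| = |\lambda|$. Splitting off the term indexed by $j_\ast$, I want a lower bound on $\sum_{j\ne j_\ast}|\mu_j|\cdot|l_{i_j}|$. Two ingredients supply it. The rearrangement inequality bounds it below by $\sum_{m=1}^{k-1} b_m c_m$, where $b_1\geq\cdots\geq b_{k-1}$ is the decreasing rearrangement of $\{|\mu_j|:j\ne j_\ast\}$ and $c_1\leq\cdots\leq c_{k-1}$ the increasing rearrangement of $\{|l_{i_j}|:j\ne j_\ast\}$; and because a pairing uses Lie pieces with pairwise distinct indices while $\mathbb{L}$ is ordered by non-decreasing size, the $m$-th smallest of any set of distinct Lie pieces has size at least $r_m$, so $c_m\geq r_m$. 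Writing $|\mu_{j_\ast}| = \theta_p$ for the position $p$ corresponding to the deleted part, the multiset $\{|\mu_j|:j\ne j_\ast\}$ is $\theta$ with one part removed, so $b_m = \theta_m$ for $m<p$ and $b_m = \theta_{m+1}$ for $m\geq p$. Feeding all of this back into the target-size identity and cancelling the common sum $\sum_{m<p}\theta_m r_m$, the inequality to be proved collapses to
\[
(\theta_p - \theta_k)\,S \;\geq\; \sum_{m=p}^{k-1}(\theta_m - \theta_{m+1})\,r_m .
\]

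This last inequality is where I expect the only real difficulty. If $p=k$ — the largest Lie piece is paired with a smallest part of $\mu$ — both sides vanish and there is nothing to do; the genuine case is $p<k$, where the crude estimate $|\mu_{j_\ast}|\geq\theta_k$ is too lossy and one must exploit that pairing a large Lie piece with a large part of $\mu$ inflates the target size. Here the key observation is $S\geq r_k$: the $k$ Lie pieces of the solution have distinct indices, so the largest of them has index $\geq k$ and hence size $\geq r_k$. Combining this with the monotonicity $r_m \leq r_k$ for $m\leq k$ and the telescoping identity $\theta_p - \theta_k = \sum_{m=p}^{k-1}(\theta_m - \theta_{m+1})$ (all summands nonnegative), we get $\sum_{m=p}^{k-1}(\theta_m-\theta_{m+1})r_m \leq S\sum_{m=p}^{k-1}(\theta_m-\theta_{m+1}) = (\theta_p-\theta_k)S$, which is exactly the required bound. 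Tracing back through the reductions and taking floors yields $S\leq\varphi$, so all of $l_{i_1},\dots,l_{i_k}$ lie in $\mathbb{L}_{\leq\varphi}$, as claimed.
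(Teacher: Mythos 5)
Your proposal is correct and rests on exactly the same ingredients as the paper's proof: the target-size identity, the rearrangement inequality pairing the weakly decreasing parts of $\theta$ against weakly increasing Lie-piece sizes, and the observation that $k$ distinct indices force the $m$-th smallest size to be at least $r_m$ (and the largest to be at least $r_k$). The only difference is organisational: the paper argues by contradiction after sorting all $k$ sizes at once, so the largest size automatically lands on the weight $\theta_k$, whereas you split off the maximum first and recover the same bound via the telescoping step --- same approach, slightly more bookkeeping.
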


\begin{proof}
As usual, let $r_i$ denote the size of the $i$-th Lie piece. Consider the set,
\[
X = \{ \rho \in \mathbb{L}_{|\cdot|}: \theta_1\cdot r_1 + \cdots + \theta_{k-1}\cdot r_{k-1} + \theta_k \cdot \rho \leq |\lambda| \}.
\]
First observe that by construction $\varphi = \max(X)$. Since the $\theta_i$'s are weakly decreasing and the $r_i$'s are weakly increasing, it is clear that,
\begin{equation} \label{EqMinimalCombo}
\theta_1 \cdot r_1 + \cdots \theta_k \cdot r_k
\end{equation}
is minimal among $\{\theta_1 \cdot r_{i_1} + \cdots + \theta_k \cdot r_{i_k}: (r_{i_1}, \ldots, r_{i_k}) \in \mathbb{L}_{|\cdot|}^k\}$.\\

\noindent Suppose for a contradiction that there exists a $k$-tuple $(r_{i_1}, \ldots, r_{i_k}) \in \mathbb{L}_{|\cdot|}^k$ with some $r_{i_j} > \varphi$ such that the assembly-size,
\[
\theta_1 \cdot r_{i_1} + \cdots \theta_k \cdot r_{i_k} \leq \lambda.
\]
Let $\sigma \in \SYM_k$ be a(ny) permutation of the sizes $r_{i_j}$ such that $r_{\sigma(i_1)} \leq r_{\sigma(i_2)} \leq \cdots \leq r_{\sigma(i_k)}$. Then our contradictory hypothesis is that $r_{\sigma(i_k)} > \varphi$.\\

\newpage
\noindent We have,

\begin{align}
|\lambda| &\geq \theta_1 \cdot r_{i_1} + \cdots + \theta_{k-1} \cdot r_{i_{k-1}} + \theta_k \cdot r_{i_k} \notag \\
& \geq \theta_1 \cdot r_{\sigma(i_1)} + \cdots + \theta_{k-1} \cdot r_{\sigma(i_{k-1})} + \theta_k \cdot r_{\sigma(i_k)} \notag \\
& \geq \theta_1 \cdot r_{1} + \cdots + \theta_{k-1} \cdot r_{k-1} + \theta_k \cdot r_{\sigma(i_k)} \notag
\end{align}
where the last inequality follows from the minimality of (\ref{EqMinimalCombo}). This shows that $r_{\sigma(i_k)} \in X$, contradicting the maximality of $\varphi$.
\end{proof}

\noindent The upshot of this result is that we can restrict our search for solutions to the smaller set $\mathbb{L}_{\leq \varphi}$. This addresses Remark \ref{RemarkTwoProblems} (\ref{ItemExplode}) as promised.\\

\begin{example}
We demonstrate the scale of improvement afforded by our improved upper bound $\varphi$. Consider the shape $\theta = (2,2,1)$ and the target-size $9$. We see that $\varphi(\theta, 9) = 3$. The number of 3-element subsets of $\mathbb{L}_{\leq 3}$ is 6, whereas the number of 3-element subsets of $\mathbb{L}_{\leq 9}$ is $84027234$.
\end{example}

\subsubsection{Implementation of shape analysis}
We are ready to turn the discussion above into a procedure that we call \emph{shape analysis}.\\

\noindent Fix a target-size $d \in \mathbb{Z}_{>0}$ and a shape $\theta = (\theta_1, \ldots, \theta_k) \vdash m \leq d$. We compute $\varphi = \varphi(\theta, d)$ and then search in $\mathbb{L}_{\leq \varphi}$ for all $k$-tuples $(l_{i_1}, \ldots, l_{i_k})$ such that,
\[
\theta_1\cdot r_{i_1} + \cdots + \theta_k \cdot r_{i_k} = d,
\]
caching the indices $(i_1, \ldots, i_k)$ as we go.

\begin{definition}
We refer to such a $k$-tuple of indices as an \textbf{instruction}.
\end{definition}

\noindent We implement a recursive algorithm computing all instructions for a given shape $\theta$ and target-size $d$. The psuedo-code for this algorithm is given below.\\

\begin{minipage}{.9\linewidth}

\begin{algorithm}[H]

\label{AlgBuildInstructions}

\SetKwFunction{BuildInstructions}{\textsc{build$\_$instructions}}

\SetAlgoLined

\BlankLine

\KwInput{A target-size $d \in \mathbb{Z}_{>0}$ and a shape $\theta \vdash m \leq d$.;

\BlankLine

Due to the recursive nature of the algorithm we also pass an instruction $I$ (default empty array []) and a pointer $p$ (default int 0) as input.}

\BlankLine

\KwResult{ We cache completed instructions along the way in a hash table. }

\BlankLine

Compute upper bound $\varphi = \varphi(\theta, d)$.;

$L \gets \mathbb{L}_{|\leq\varphi|}$;

\BlankLine

\LinesNumbered

\nonl\BuildInstructions{$d,\theta, L, I, p$}

\BlankLine

\nonl (base case)

\uIf{length $\theta[p:] = 1$}{

\For{$l \in L$}{
\uIf{$d = \theta[p] \cdot |l|$}{
Create new instruction $I'$ from instruction by adding index of $l \in \mathbb{L}$.

Cache new instruction $I'$.
}
}
}

\BlankLine

\Else{
$t \gets \theta[p]$

$p \gets p + 1$

\For{$l \in L$}{

$d' \gets d - t \cdot l$

Create new instruction $I'$ from instruction $I$ by adding index of $l \in \mathbb{L}$.

\BuildInstructions{$d',\theta, L\backslash\{l\}, I', p$}

}
}

\caption{Build instructions}

\end{algorithm}

\end{minipage}\\

\noindent This algorithm caches its results in a hash table. We give that a hash a name.

\begin{definition}(Instructions.)
Let $\mathcal{I} = \mathcal{I}(d)$ denote the hash table (of target-size $d$) mapping shapes $\theta$ to the set of instructions computed in Algorithm \ref{AlgBuildInstructions}.
\end{definition}

\noindent Before presenting our algorithm computing composition factors, there is one subtlety that needs to be addressed.

\paragraph{Over counting.} Certain cases arise when we can over count the number of solutions to a $(\mu, \lambda)$-decomposition puzzle. These are best explained by way of an example. Suppose we have a $\mu$-decomposition of shape $[2,2]$ and we have a target-size of 5. In this case we see that there are two instructions $I_1, I_2$:
\[
\Yvcentermath1
I_1 = (1,2) \qquad \rightsquigarrow \qquad \theta_1 \times \left|~ \tiny \yng(1) ~\right| + \theta_2 \times \left|~ \tiny \yng(1,1)~\right| = 5
\]
\[
\Yvcentermath1
I_2 = (2,1) \qquad \rightsquigarrow \qquad \theta_1 \times \left|~ \tiny \yng(1,1)~\right| + \theta_2 \times \left|~ \tiny \yng(1) ~\right| = 5
\]

\noindent In the case that the underlying $\mu$-decomposition is,
\[
\Yvcentermath1
\mu_1 = \tiny\yng(1,1) \qquad \mu_2 = \tiny\yng(2)
\]
then both of these instructions give rise to potential solutions. However, suppose the underlying $\mu$-decomposition is as follows.
\[
\Yvcentermath1
\mu_1 = \tiny\yng(1,1) \qquad \mu_2 = \tiny\yng(1,1)
\]
In this case, both instructions correspond to the same assembly and any solution arises twice as often as it should.\\

\noindent It is easy to see that a solution involving the $\mu$-decomposition $(\mu_1, \ldots, \mu_k)$ is over counted in this way if and only if it contains repeated partitions $\mu_i = \mu_j$. Moreover, we can explicitly calculate the size of the over-count.

\begin{definition}
Let $(\mu_1, \ldots, \mu_k)$ be a $\mu$-decomposition and let $\{ \nu_1, \ldots, \nu_t \}$ be the set of its distinct partitions. Say that $\nu_i$ appears in the $\mu$-decomposition $n_i$ times. Define the \textbf{over-count factor} of $(\mu_1, \ldots, \mu_k)$ as,
\[
\mbox{over}(\mu_1, \ldots, \mu_k) = \left( n_1 ! \cdots n_l ! \right)^{-1}.
\]
\end{definition}

\noindent In the implementation of our algorithm we will account for over counting by computing the over-count factor. Concretely, the contribution of a given solution $\mathbf{s}$ to a $(\mu, \lambda)$-decomposition puzzle with $\mu$-decomposition $(\mu_1, \ldots, \mu_k)$ is multiplied by the over-count factor over$(\mu_1, \ldots, \mu_k)$.

\section{The algorithm} \label{SecAlgorithm}
We are now ready to outline the algorithm computing $c_{\lambda\mu}$. Our strategy is to compute all coefficients $c_{\lambda\mu}$ with $|\lambda|=d$ fixed at once. By Lemma \ref{LemmaMuSizeLambda} we already know the coefficients $c_{\lambda\mu}$ in the case $|\mu| = d$. Our algorithm will therefore iterate through all partitions $\mu$ of size at most $d-1$. Fix a partition $\mu \vdash m < d$.

\paragraph{$\mu$-decompositions.} We first describe how to generate all possible $\mu$-decompositions. Recall from Section \ref{SubSubSecAvoidPleth} that many $\mu$-decompositions can have the same underlying shape $\theta$. Fix a shape $\theta = (\theta_1, \ldots, \theta_k) \vdash m$ and form the product,
\[
M_\theta := \mbox{Partitions$_\mu(\theta_1$)} \times \cdots \times \mbox{Partitions$_\mu(\theta_k$)},
\]
where,
\[
\mbox{Partitions$_\mu(\theta_i$)} := \{ \mu_i \vdash \theta_i : \mu_i \subseteq \mu \}.
\]
Notice that a $k$-tuple in $M_\theta$ is precisely a $\mu$-decomposition of shape $\theta$. We are then left to enumerate the set of distinct $k$-tuples in $M_\theta$, which we denote $X_\theta$. In our implementation we store this in a hash table.

\begin{definition}($\mu$-decomposisions.)
Let $\mathcal{M} = \mathcal{M}(\mu)$ be the hash table (associated to the partition $\mu$) that maps a shape $\theta$ to the set of distinct $\mu$-decompositions $X_\theta$.
\end{definition}

\paragraph{Assembly.} Fix a shape $\theta \vdash m$. Given a $\mu$-decomposition $(\mu_1, \ldots, \mu_k) \in \mathcal{M}[\theta]$ and an instruction $I = (i_1, \ldots, i_k) \in \mathcal{I}[\theta]$ we need to form the assembly,
\[
(\mu_1, \ldots, \mu_k) \oast I := (\mu_1, \ldots, \mu_k) \oast (l_{i_1}, \ldots, l_{i_k}).
\]
This involves applying a sequence of plethysm and tensor product operations\footnote{We implement our algorithm in SAGE, which has optimised implementations of both plethysm and tensor product.}. We then collect all Lie pieces $\lambda$ appearing in $(\mu_1, \ldots, \mu_k) \oast I$, together with their multiplicities $\beta$. Of course, implementing this assembly involves having a representation for the free Lie algebra. 

\begin{definition}(Assembly of instructions.)
Let $A = A(\mu_1, \ldots, \mu_k; I)$ denote the set of tuples $(\lambda, \beta)$ arising in the assembly $(\mu_1, \ldots, \mu_k) \oast I$.
\end{definition}

\begin{minipage}{.9\linewidth}

\begin{algorithm}[H]

\label{AlgCompositionFactors}

\SetKwFunction{BuildInstructions}{\textsc{build$\_$instructions}}
\SetKwFunction{IterLR}{\textsc{iter$\_$lr}}
\SetKwFunction{CompositionFactors}{\textsc{composition$\_$factors}}

\BlankLine

\KwInput{A target-size $d \in \mathbb{Z}_{>0}$.}

\BlankLine

\KwResult{Compute composition factors $c_{\lambda\mu}$ for all partitions $\lambda$ of size $d$.}

\BlankLine

Initialise all coefficients $c_{\lambda\mu} = 0$ for $\lambda \neq \mu$ and $c_{\lambda\lambda} = 1$.

\BlankLine

\For{$m<d$}{

\BlankLine

\For{$\theta \vdash m$}{
$\mathcal{I}[\theta] \gets \BuildInstructions(d, \theta)$
}

\BlankLine

\For{$\mu \vdash m < d$}{

\BlankLine

$\theta \gets$ shape of $\mu$

\uIf{$\theta \in \mathcal{I}$}{

\BlankLine

$\mathcal{M} \gets \mathcal{M}(\mu)$ the hash table of $\mu$-decompositions.

instructions $\gets \mathcal{I}[\theta]$ 

\BlankLine

\For{$(\mu_1, \ldots, \mu_k) \in \text{\emph{decompositions}}$}{

\BlankLine

$\alpha \gets$ \IterLR($\mu, [\mu_1, \ldots, \mu_k]$)

\BlankLine

\uIf{$\alpha > 0$}{
$A \gets A(\mu_1, \ldots, \mu_k; I)$ the assembly with $I$.

\BlankLine

\For{$(\lambda, \beta) \in A$}{
contribution $\gets \alpha \cdot \beta$

over $\gets$ over$(\mu_1, \ldots, \mu_k)$

$c_{\lambda\mu} \mathrel{+}=$ contribution $*$ over
}
}
}
}
}
}

\caption{Compute composition factors of fixed degree.}

\end{algorithm}

\end{minipage}

\section{Data analysis} \label{SecDataAnalysis}

We are now ready to implement our algorithm. The source code for our implementation is publicly available on GitHub\footnote{\url{https://github.com/aminsaied/composition_factors}}. Recall that the coefficient $c_{\lambda\mu}$ can be regarded as the multiplicity of $\SF{\lambda}$ in

\begin{equation} \label{EqThisAgain}
\SF{\mu}(\mathcal{L}(V)).
\end{equation}
As such we are able to use SAGE's symmetric functions libraries to compute the coefficients $c_{\lambda\mu}$ directly from (\ref{EqThisAgain}) (see Section \ref{SecAppendix}). We use this as a baseline against which we can measure the performance of our algorithm (see Section \ref{SecRunningTime}).\\

\noindent The baseline algorithm is only able to compute those composition factors $c_{\lambda\mu}$ where $\lambda$ is of degree at most 5. See Section \ref{SecRunningTime} for running time experiments. The optimisations in our algorithm allow us to extend this considerably and compute all composition factors of degree at most $14$.
%
\begin{table}[H]
\begin{center}
\begin{tabular}{c|cc}
& Degree & Number of coefficients \\\hline
Baseline & 5 & 324 \\
Our algorithm & 14 & 257,049
\end{tabular}
\caption{Comparison of our algorithm's range of computation against the baseline algorithm using SAGE's built-in methods.}
\label{TabPerform}
\end{center}
\end{table}

\noindent We are therefore able to extend the range of computation by a factor of over 750. In the next section we begin analysis of the coefficients by visualising the data.

\subsection{Visualisations}
In Fig. \ref{FigShowLeq5} we display the data computed by our baseline algorithm. The axes are labelled by partitions and the colour of the square in position $(\mu, \lambda)$ is determined by the coefficient $c_{\lambda\mu}$ (as per the colour-bar on the right of the plot).


\begin{figure}[H]
\begin{center}
\includegraphics[scale=0.2]{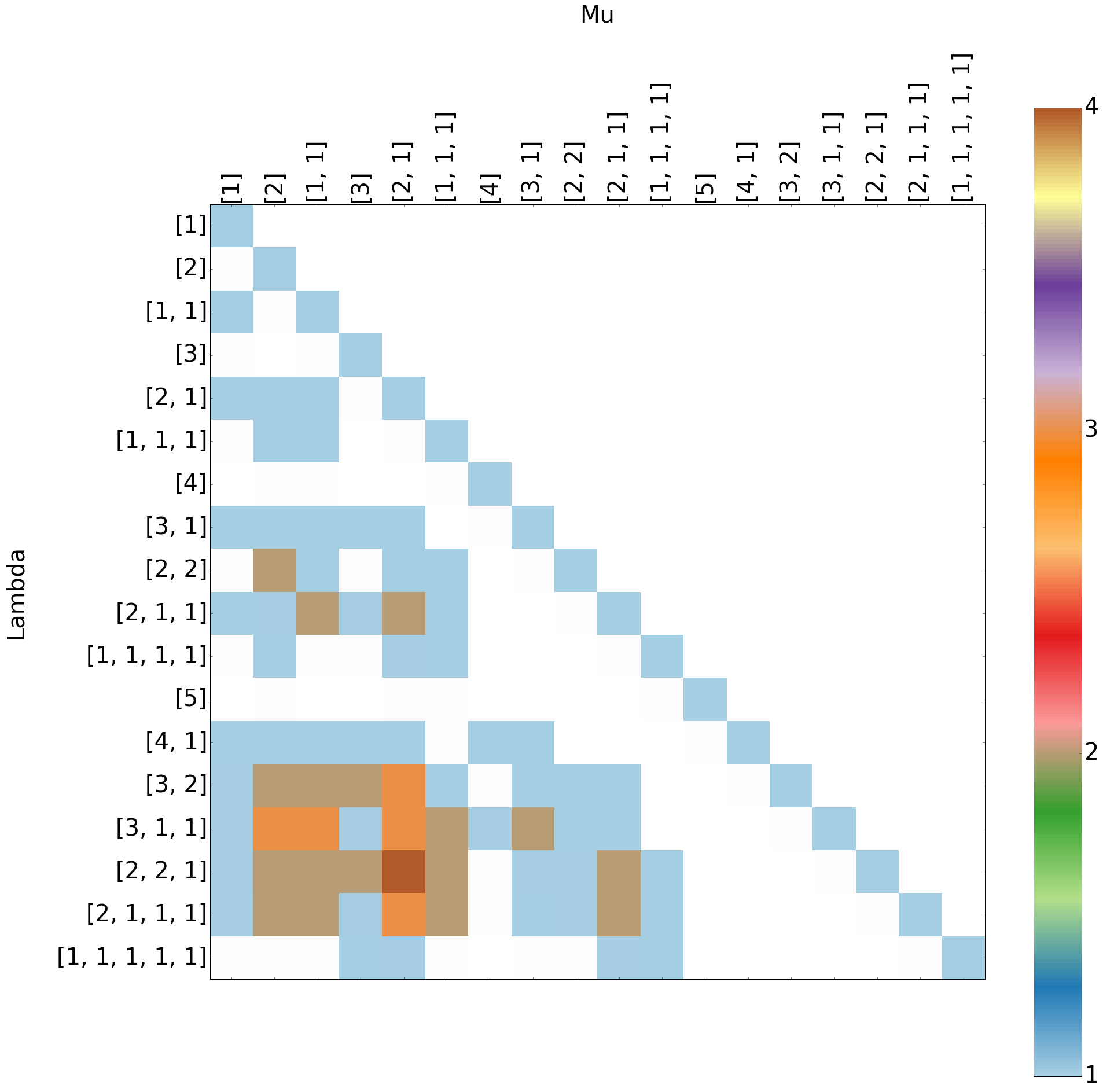}
\end{center}
\caption{Composition factors of degree up to 5. The $(\mu, \lambda)$-entry is coloured according to the coefficient $c_{\lambda\mu}$, with the scale indicated on the right.}
\label{FigShowLeq5}
\end{figure}

\newpage
\noindent We notice some features even from the small amount of data produced by the baseline algorithm.

\begin{enumerate}
\item The diagonal entries are all 1.
\item The matrix is lower-diagonal.
\item If $|\mu| = |\lambda|$ and $\mu \neq \lambda$ then $c_{\lambda\mu} = 0$.
\end{enumerate}

\noindent All of these observations are easy to prove and follow immediately from the definition of $c_{\lambda\mu}$. In short, we don't gain much insight from this plot. In Fig. \ref{FigShowLeq14} we plot for partitions of size $\leq14$. As well as being consistent with the previous observations we now notice some more interesting features.

\begin{figure}[H]
\begin{center}
\includegraphics[scale=0.26]{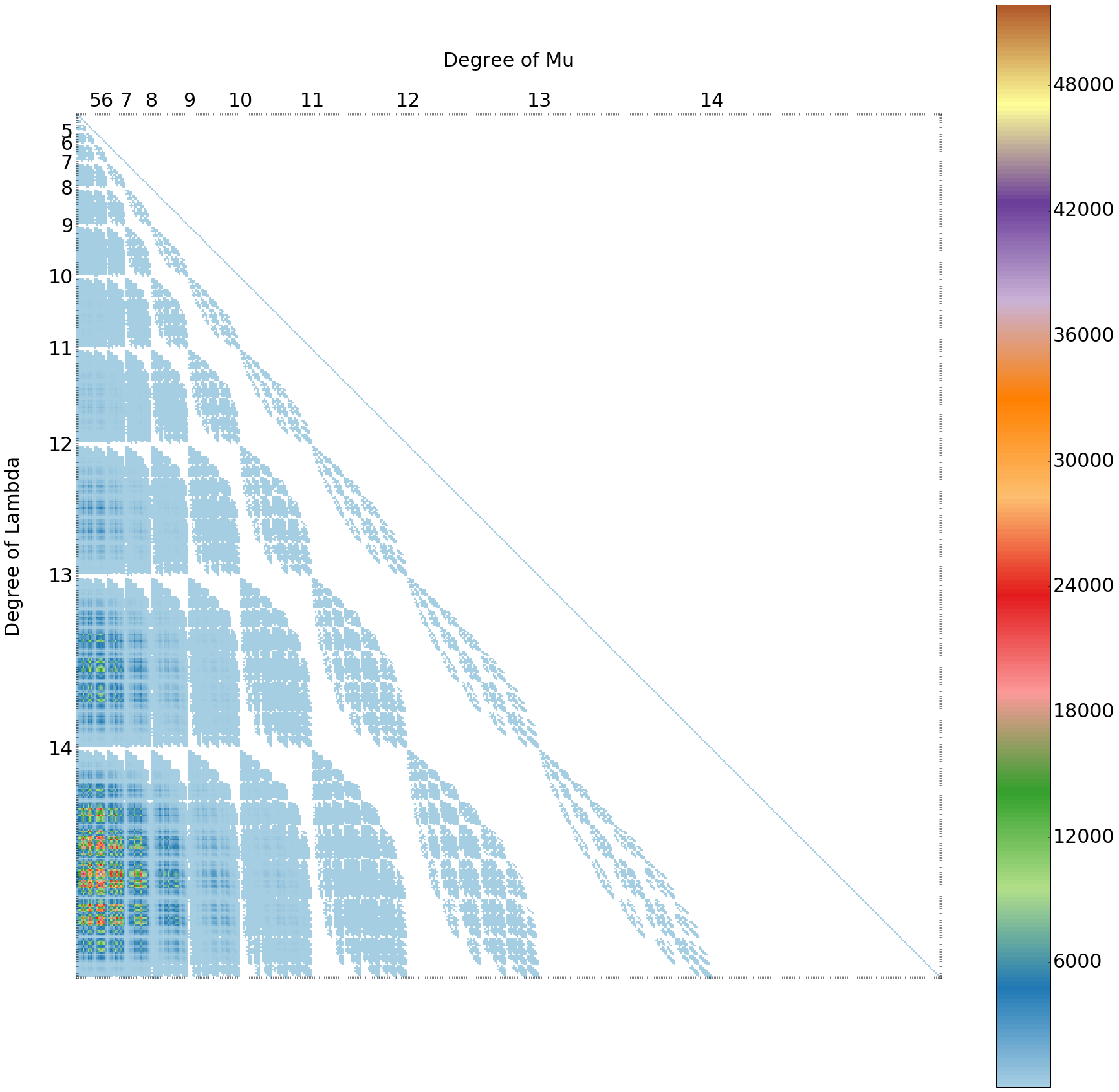}
\end{center}
\caption{Composition factors of degree up to 14. These represents the full range of computations made by our algorithm. For readability we no longer label the partitions on the axes, instead we label the degree (or size) of the partitions at the point at which the degree changes.}
\label{FigShowLeq14}
\end{figure}


\newpage
\subsection{Clustering}

At this scale it becomes apparent that there are clusters in the data. The clusters are confined to rectangular blocks determined by sizes of partitions. Concretely, the pair $(\mu, \lambda)$ lies in the same cluster as $(\mu', \lambda')$ if and only if $|\mu| = |\mu'|$ and $|\lambda| = |\lambda'|$. We therefore refer to the cluster containing $(\mu, \lambda)$ as the $(|\mu|, |\lambda|)$-cluster.\\

\begin{figure}[H]
\begin{center}
\vspace{-0.7cm}
\input{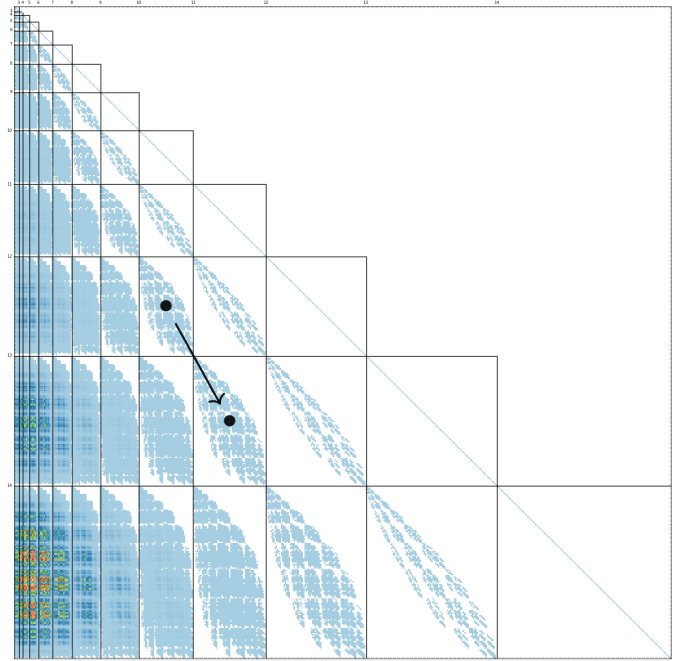}
\vspace{-1.5cm}
\end{center}
\caption{Clusters in the composition factors data. An arrow is drawn between the $(10, 12)$-cluster and the $(11, 13)$-cluster.}
\label{FigureClusters}
\end{figure}

\noindent In particular, the clusters arise as a result of certain rows and columns which are mostly filled with zeros and serve to divide the data into rectangular regions as shown in Fig. \ref{FigureClusters}. Upon inspection we see that these `mostly-zero rows' correspond to coefficients $c_{\lambda\mu}$ where $\lambda$ is of the form $[d]$ or $[1^d]$. This leads us to conjecture that there might be a simple combinatorial rule for determining these coefficients.

\begin{conjecture}~
\begin{enumerate}
\item
Fix $\lambda=[d]$. Then the coefficient $c_{\lambda\mu} = 0$ unless $\mu = [d]$. 

\item 
Fix $\lambda = [1^d]$. Then the coefficient $c_{\lambda\mu} = 0$ unless $\mu$ is of the form $[a, 1^b]$ where $2a + b = d$ or $2a + b = d+1$. 
\end{enumerate}
\end{conjecture}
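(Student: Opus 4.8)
The plan is to work in the ring of symmetric functions. Let $\ell_d=\tfrac1d\sum_{e\mid d}\mu(e)\,p_e^{d/e}$ be the $d$-th Lie character (the Frobenius characteristic of $\Lie_d$), and let $\mathbf{L}=\sum_{d\ge1}\ell_d$, so that the plethysm $s_\mu[\mathbf{L}]$ records the $\GL(V)$-decomposition of $\SF{\mu}(\mathcal{L}(V))$: by (\ref{EqCoeffsFromFreeLie}), $s_\mu[\mathbf{L}]=\sum_\lambda c_{\lambda\mu}\,s_\lambda$, so $c_{\lambda\mu}=\langle s_\mu[\mathbf{L}],s_\lambda\rangle$. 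Only two elementary facts about the $\ell_d$ are needed, both immediate from Stanley's tableaux criterion in Section~\ref{SubSecLiePieces}: the row $[d]$ occurs in $\Lie_d$ only for $d=1$ (its one tableau has $\mathrm{maj}=0$), and the column $[1^d]$ occurs in $\Lie_d$ only for $d\le2$ (its one tableau has $\mathrm{maj}=d(d-1)/2\not\equiv1\bmod d$ for $d\ge3$); equivalently $\langle\ell_d,h_d\rangle=0$ for $d\ge2$ and $\langle\ell_d,e_d\rangle=0$ for $d\ge3$.

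Part (1) then follows from the single-variable specialization $f\mapsto[x_1^n]\,f(x_1,0,0,\dots)$, which extracts the coefficient of $s_{(n)}$ since $s_\nu(x_1,0,\dots)$ vanishes unless $\nu=(n)$. Because $\ell_d(x_1,0,\dots)=\langle\ell_d,h_d\rangle\,x_1^d=0$ for $d\ge2$, we get $\mathbf{L}(x_1,0,\dots)=x_1$, so $s_\mu[\mathbf{L}](x_1,0,\dots)=s_\mu(x_1,0,\dots)$, which is $x_1^{|\mu|}$ when $\mu$ is a single row and $0$ otherwise; hence $c_{[d],\mu}=\delta_{\mu,[d]}$, a sharpening of (1).

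For part (2) I would use the standard involution $\omega$: since $e_d=\omega(h_d)$, $c_{[1^d],\mu}=\langle\omega(s_\mu[\mathbf{L}]),h_d\rangle=[x_1^d]\,\omega(s_\mu[\mathbf{L}])(x_1,0,\dots)$. Split $\mathbf{L}=(\ell_1+\ell_2)+\mathbf{L}_{\ge3}$ with $\ell_1=p_1$ and $\ell_2=e_2$, expand $s_\mu$ of this sum via iterated Littlewood--Richardson into products of plethysms $s_{\nu^{(i)}}[\ell_i]$ with the homogeneous pieces $\ell_i$, and apply $\omega$ term by term using $\omega(s_\nu[g])=s_{\nu^{\varepsilon}}[\omega g]$ for homogeneous $g$ (with $\nu^{\varepsilon}=\nu'$ if $\deg g$ is odd, $\nu$ if even). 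Each factor becomes $s_{?}[\omega\ell_i]$, and on specializing one variable it is $s_{?}[\,\omega\ell_i(x_1,0,\dots)\,]$; since $\omega\ell_i(x_1,0,\dots)=\langle\ell_i,e_i\rangle\,x_1^i=0$ for $i\ge3$, every term with a nonempty $\nu^{(i)}$ for some $i\ge3$ dies, so $\omega(s_\mu[\mathbf{L}])(x_1,0,\dots)=\omega(s_\mu[\ell_1+\ell_2])(x_1,0,\dots)$. A short specialization computation — expand $s_\mu[\ell_1+\ell_2]=\sum_{\alpha,\beta}L^{\mu}_{\alpha\beta}\,s_\alpha\,s_\beta[e_2]$, apply $\omega$ to get $\sum_{\alpha,\beta}L^{\mu}_{\alpha\beta}\,s_{\alpha'}\,s_\beta[h_2]$, and note that $s_{\alpha'}(x_1,0,\dots)$ survives only for $\alpha=[1^r]$ while $s_\beta[h_2](x_1,0,\dots)=s_\beta(x_1^2,0,\dots)$ survives only for $\beta=[s]$ — then gives
\[
\omega(s_\mu[\mathbf{L}])(x_1,0,\dots)=\sum_{r+s=|\mu|}L^{\mu}_{[1^r],[s]}\;x_1^{\,r+2s}.
\]

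To finish, observe that $L^{\mu}_{[1^r],[s]}=\langle s_\mu,\,e_r h_s\rangle$, and by Pieri $e_r h_s$ is supported on hooks, each hook arising for exactly two pairs $(r,s)$. Hence the displayed polynomial is identically zero unless $\mu$ is a hook $\mu=[a,1^b]$, in which case it equals $x_1^{\,2a+b}+x_1^{\,2a+b-1}$; extracting $[x_1^d]$ yields $c_{[1^d],\mu}=1$ precisely when $2a+b\in\{d,d+1\}$ and $0$ otherwise, which is the conjecture, with the bonus that every such coefficient is $0$ or $1$. The one step I expect to require care is the interplay of $\omega$ with $s_\mu[\mathbf{L}]$: since $\mathbf{L}$ is not homogeneous one cannot apply the plethystic behaviour of $\omega$ to it directly, and must first break $s_\mu[\mathbf{L}]$ into products of plethysms with the homogeneous $\ell_i$ and then carry the parity-dependent conjugation of the $\nu^{(i)}$ through carefully; everything else is Pieri bookkeeping resting on the two vanishing statements above.
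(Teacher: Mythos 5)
The paper does not prove this statement at all: it is presented purely as a conjecture suggested by the ``mostly-zero'' rows visible in the plotted data, so there is no paper proof to compare against. Your proposal, by contrast, is an essentially correct proof, and a sharper statement to boot: it gives $c_{[d],\mu}=\delta_{\mu,[d]}$ and $c_{[1^d],\mu}\in\{0,1\}$ with value $1$ exactly when $\mu=[a,1^b]$ and $2a+b\in\{d,d+1\}$. The two input facts are right and do follow from the maj-criterion quoted in Section \ref{SubSecLiePieces}: the unique row tableau has $\mathrm{maj}=0$, so $\langle \ell_d,h_d\rangle=0$ for $d\ge2$, and the unique column tableau has $\mathrm{maj}=d(d-1)/2\not\equiv1\bmod d$ for $d\ge3$, so $\langle \ell_d,e_d\rangle=0$ for $d\ge3$ (equivalently, $\mathcal{L}_d(V)$ contains no $\Sym^d V$ for $d\ge2$ and no $\Lambda^d V$ for $d\ge3$). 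Part (1) via the one-variable specialization is clean, provided you note explicitly that $f[g](x_1,0,\dots)=f[g(x_1,0,\dots)]$ requires $g$ to be a nonnegative sum of monomials (true here, since each $\ell_d$ and each $\omega\ell_d$ is Schur-positive). Part (2) is the delicate step you already flag, and your handling is the right one: you may not apply $\omega$ plethystically to the inhomogeneous $\mathbf{L}$, but after the iterated Littlewood--Richardson expansion into products $\prod_i s_{\nu^{(i)}}[\ell_i]$ each inner function is homogeneous and the parity rule $\omega(s_\nu[g])=s_{\nu^{\varepsilon}}[\omega g]$ applies factorwise; the specialization then kills every factor with $i\ge3$, and the remaining computation with $\ell_1=p_1$, $\ell_2=e_2$ is as you say. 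The final Pieri step is correct up to edge cases you should state: $e_rh_s=s_{(s,1^r)}+s_{(s+1,1^{r-1})}$ holds for $r,s\ge1$, with only one term when $r=0$ or $s=0$, but the bookkeeping still gives that each hook $[a,1^b]$ ($a\ge1$) arises from exactly the two pairs $(r,s)=(b,a)$ and $(b+1,a-1)$, yielding $x_1^{2a+b}+x_1^{2a+b-1}$ and hence the conjectured condition. In short: this settles both parts of a statement the paper leaves open, and the only items to tighten in a write-up are the specialization-commutes-with-plethysm justification, the factorwise application of the $\omega$ rule, and the Pieri boundary cases.
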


\noindent A key observation is that the clusters appear to propagate down and to the right. That is, there is a strong similarity between the $(m, d)$-cluster and the $(m+1, d+1)$-cluster. See for example Fig. \ref{FigureClusters}. We investigate this similarity in the next section.

\subsubsection{Stabilising Plateaus}

Fix an initial pair of partitions $\mu, \lambda$ such that $c_{\lambda\mu} > 0$ and consider the process of adding boxes to the top row of these partitions. We introduce some notation.

\begin{definition}
For $n \in \mathbb{N}$, let $\mu^{+n}$ denote the partition obtained from $\mu$ by adding $n$ boxes to the top row of $\mu$.
\end{definition}

\noindent For example,

\[
\Yvcentermath1
\mu ~=~ \yng(2,2,1) ~\rightsquigarrow~ \mu^{+1} ~=~ \yng(3,2,1) ~\rightsquigarrow~ \mu^{+2} ~=~\yng(4,2,1) ~\rightsquigarrow~ \cdots
\]


\begin{definition}
Define the \textbf{diagonal push} operation by,
\[
\Delta: (\lambda, \mu) \mapsto (\lambda^{+1}, \mu^{+1}).
\]
\end{definition}

\noindent Notice that if a pair of partitions $(\mu, \lambda)$ is in the $(m,d)$-cluster, then $\Delta(\mu, \lambda)$ lies in the $(m+1, d+1)$-cluster. We are motivated to investigate the behaviour of $c_{\lambda\mu}$ under repeated applications of the operation $\Delta$. Below we plot the sequence of coefficients corresponding to,
\[
(\lambda, \mu), \Delta(\lambda, \mu), \Delta^2(\lambda, \mu), \ldots
\]
for different initial pairs of partitions $\mu, \lambda$.

\begin{figure}[H]
\begin{center}
\includegraphics[scale=0.2]{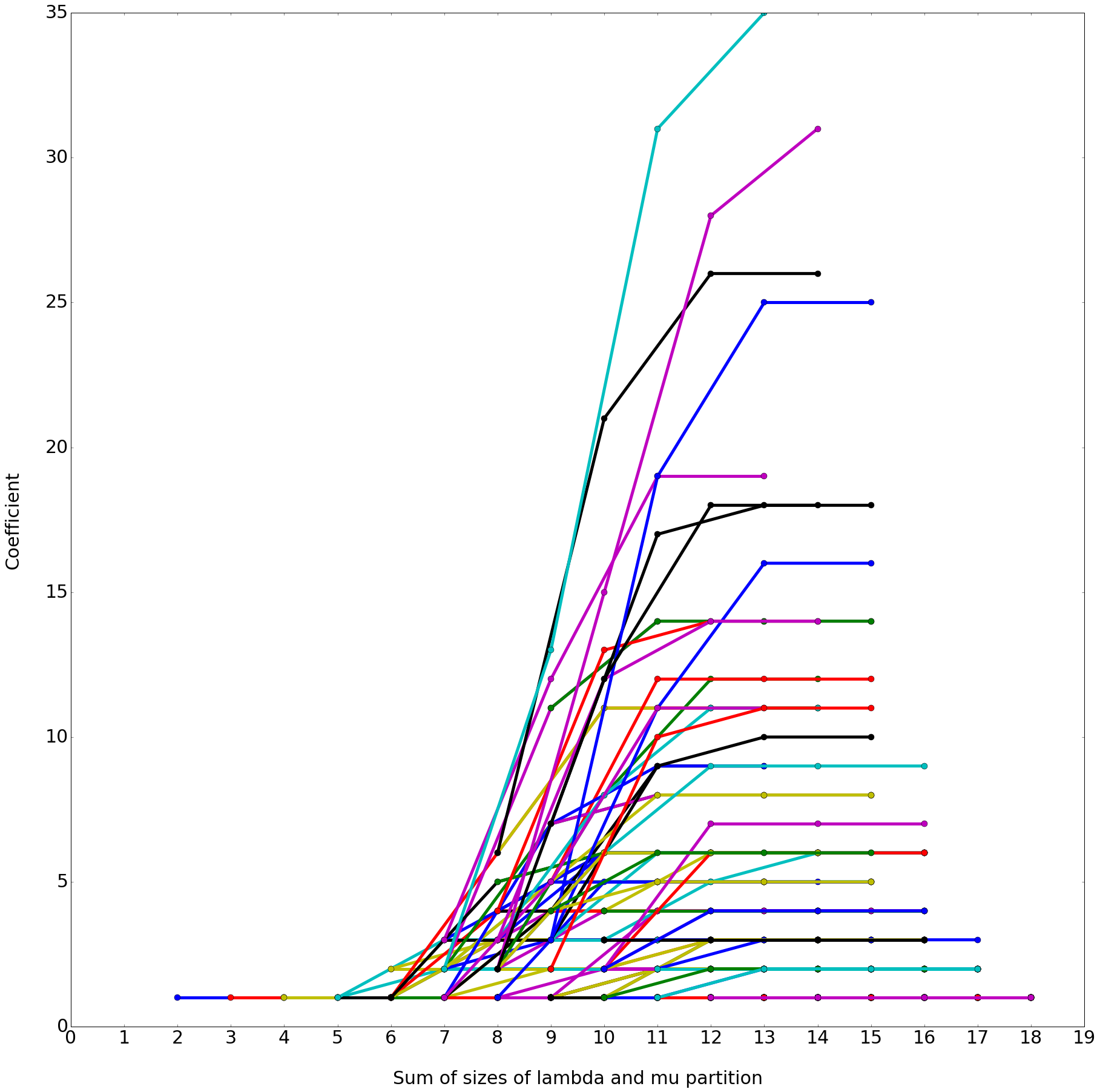}
\end{center}
\caption{The stabilising effect of adding boxes to the top row. Here we plot data from composition factors of size $\leq10$. Notice the distinctive plateaus.}
\end{figure}

\noindent The behaviour is quite striking. Observe that under the operation of $\Delta$, the coefficients rise to a plateau and stabilise. As the sequences progress the data suggests that the coefficients increase to a point, beyond which the sequences flatten into horizontal tails. More formally we make the following conjecture based on these plots.

\begin{conjecture}
Fix partitions $\mu, \lambda$. There exists numbers $x, N$ such that, 
\[
c_{\lambda^{+r} \mu^{+r}} = x,
\]
for all $r \geq N$.
\end{conjecture}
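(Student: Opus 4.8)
The plan is to split off the linear part of the free Lie algebra and reduce the conjecture to a stabilisation statement for Littlewood-Richardson coefficients. Write $\mathcal{L}(V) = V \oplus W$, where $W := \bigoplus_{i \geq 2}\mathcal{L}_i(V)$ is the span of the Lie pieces of size at least $2$. As a polynomial $\GL(V)$-module, every irreducible constituent $\SF{\sigma}(V)$ of $\SF{\rho}(W)$ satisfies $|\sigma| \geq 2|\rho|$, since $\SF{\rho}(W)$ is a $\GL(V)$-summand of $W^{\otimes|\rho|}$, which lives in polynomial degrees $\geq 2|\rho|$. Applying the sum rule (\ref{EqSchurSum}) to $\SF{\mu}(V\oplus W)$ and reading off the multiplicity of $\SF{\lambda}(V)$ yields
\[
c_{\lambda\mu} \;=\; \sum_{\rho,\sigma} m_{\sigma\rho}\,\langle s_{\mu/\rho},\, s_{\lambda/\sigma}\rangle,
\]
where $m_{\sigma\rho}$ is the multiplicity of $\SF{\sigma}(V)$ in $\SF{\rho}(W)$, $s_{\mu/\rho}$ is the skew Schur function, and $\langle\,,\,\rangle$ is the standard inner product on symmetric functions (for which the $s_\pi$ are orthonormal), so that $\langle s_{\mu/\rho},s_{\lambda/\sigma}\rangle = \sum_\pi L^{\mu}_{\rho\pi}L^{\lambda}_{\sigma\pi}$. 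Each $m_{\sigma\rho}$ is a finite non-negative integer, and only finitely many pairs $(\rho,\sigma)$ contribute, since a nonzero term needs $\rho\subseteq\mu$ and $\sigma\subseteq\lambda$.

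I then replace $(\lambda,\mu)$ by $\Delta^{r}(\lambda,\mu) = (\lambda^{+r},\mu^{+r})$ and show that the contributing $(\rho,\sigma)$ are bounded independently of $r$. If $|\lambda|<|\mu|$ then $c_{\lambda^{+r}\mu^{+r}}=0$ by lower-triangularity, and if $|\lambda|=|\mu|$ then $c_{\lambda^{+r}\mu^{+r}} = \delta_{\lambda\mu}$ by Lemma \ref{LemmaMuSizeLambda}; so assume $e := |\lambda|-|\mu| > 0$. A nonzero term forces $|\mu^{+r}|-|\rho| = |\lambda^{+r}|-|\sigma|$, i.e. $|\sigma| = |\rho|+e$, and combined with $|\sigma|\geq 2|\rho|$ this gives $|\rho| \leq e$. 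Hence $(\rho,\sigma)$ ranges over a fixed finite set, and it remains to prove that for each such pair the sequence $r \mapsto \langle s_{\mu^{+r}/\rho}, s_{\lambda^{+r}/\sigma}\rangle$ is eventually constant; then $x$ is the sum of the limits and $N$ the largest of the individual thresholds.

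For the remaining claim, write $\langle s_{\mu^{+r}/\rho}, s_{\lambda^{+r}/\sigma}\rangle = \sum_\pi L^{\mu^{+r}}_{\rho\pi}L^{\lambda^{+r}}_{\sigma\pi}$. Nonvanishing of $L^{\mu^{+r}}_{\rho\pi}$ forces $\pi \subseteq \mu^{+r}$ and $|\pi| = |\mu|+r-|\rho|$; splitting $\pi = (\pi_1, \widetilde\pi)$ with $\widetilde\pi = (\pi_2,\pi_3,\dots)$, the containment gives the $r$-independent condition $\widetilde\pi \subseteq (\mu_2,\mu_3,\dots)$, while $\pi_1 = |\mu|+r-|\rho|-|\widetilde\pi|$. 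So for $r$ large the sum runs over an $r$-independent finite set of $\widetilde\pi$, and for fixed $\widetilde\pi$ both $\mu^{+r}$ and $\pi=(\pi_1,\widetilde\pi)$ gain one box in their first rows as $r\mapsto r+1$, with $\rho$ fixed. The claim then becomes the classical first-row stability of Littlewood-Richardson coefficients: counting $L^{\mu^{+r}}_{\rho\pi}$ as the number of Littlewood-Richardson tableaux of shape $\mu^{+r}/\pi$ with content $\rho$, this skew shape is, up to a rightward shift of its first row, independent of $r$; once $r$ is large enough that this first row lies strictly to the right of the second row of $\mu^{+r}/\pi$, the ballot condition forces the first row to consist of $1$'s, contributing a fixed block, and the count reduces to an $r$-independent enumeration on the remaining sub-shape. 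The same applies to the $\lambda^{+r}$ factor, and summing the finitely many eventually-constant products gives the result.

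The main obstacle is the last step: pinning down the first-row stability of Littlewood-Richardson coefficients sharply enough both to produce an explicit threshold $N$ and to confirm that the stabilisation is monotone increasing from below, matching the observed plateaus. The delicate point is the interaction between the shifted first row and the lower rows of $\mu^{+r}/\pi$ (and of $\lambda^{+r}/\pi$): it is precisely the column at which the first row ceases to overlap the second that pins down $N$, which should therefore be controlled by $\mu_1$, $\lambda_1$ and the defect $e = |\lambda|-|\mu|$. An alternative, which I would expect to yield the cleanest description of the stable range, is to package the $c_{\lambda\mu}$ as a finitely generated module over a suitable combinatorial category (in the spirit of the $\PD$-modules of \cite{Amin2}) and deduce eventual constancy from a Noetherianity argument; but that requires machinery beyond this paper, whereas the direct route above is self-contained modulo the classical Littlewood-Richardson stability.
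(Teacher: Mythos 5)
Note first that the paper itself does not prove this statement: it is stated as a conjecture, supported by the plateau plots, with the proof deferred to the forthcoming paper \cite{Amin2} via the theory of $\PD$-modules and the Whitney homology of the partition lattice. Your argument is therefore a genuinely different, self-contained route, and in outline it looks sound. The decomposition $\mathcal{L}(V)=V\oplus W$ with $W=\bigoplus_{i\ge 2}\mathcal{L}_i(V)$, the resulting formula $c_{\lambda\mu}=\sum_{\rho,\sigma}m_{\sigma\rho}\sum_{\pi}L^{\mu}_{\rho\pi}L^{\lambda}_{\sigma\pi}$, and the degree bookkeeping $|\sigma|=|\rho|+e$, $|\sigma|\ge 2|\rho|$, hence $|\rho|\le e=|\lambda|-|\mu|$, are all correct and do the real work: they confine the sum to a finite set of $(\rho,\sigma)$ independent of $r$, with the degenerate cases $e\le 0$ handled by lower-triangularity and Lemma \ref{LemmaMuSizeLambda}. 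The remaining ingredient, eventual constancy of $r\mapsto\langle s_{\mu^{+r}/\rho},s_{\lambda^{+r}/\sigma}\rangle$, is exactly the classical first-row stability of Littlewood--Richardson coefficients, and your sketch (the skew shape $\mu^{+r}/\pi$ changes only by sliding a first row of fixed length to the right, which once disconnected must be filled with $1$'s and contributes a fixed factor) is the standard proof; to make the write-up complete you should either carry that argument out carefully (including the check that $\pi_1\ge\widetilde\pi_1$ and the containments $\widetilde\pi\subseteq(\mu_2,\dots)$, $\widetilde\pi\subseteq(\lambda_2,\dots)$ hold for $r$ large, which you essentially note) or cite it, but it is not a gap in substance. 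Your closing worries about an explicit $N$ and monotone increase are not needed for the statement as posed, which asks only for existence of $x$ and $N$. What the two approaches buy is different: your route is elementary, stays within symmetric functions, and can in principle produce effective bounds on $N$ in terms of $\mu_1,\lambda_1,e$; the $\PD$-module route of \cite{Amin2} trades that concreteness for structural information --- an interpretation of the stable value $x$ (via Whitney homology) and a uniform finite-generation explanation of the plateaus across all pairs $(\lambda,\mu)$ at once, in the spirit of $\FI$-module representation stability.
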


\begin{remark}
In \cite{Amin2} we prove this conjecture by developing \emph{the theory of $\PD$-modules}. We are able to connect these coefficients to the Whitney homology of the lattice of set partitions.
\end{remark}

\subsubsection{Clusters group around the diagonal}
Another pattern that emerges from this perspective is that as clusters move out to the right the data become more concentrated around the major diagonal.

\begin{figure}[H]
\begin{center}
\input{figures/clusters_along_diagonal}
\end{center}
\caption{A sequence of clusters moving to the right.}
\label{FigureClusters}
\end{figure}

\noindent Consider for a moment the upper-right corner of a cluster. One feature the lex-order is that it is a measure of the number of boxes below the first row. We deduce that the upper right region of a cluster corresponds to coefficients $c_{\lambda\mu}$ for which $\mu$ has \emph{more boxes below the first row} than $\lambda$. This observation motivates the following figure, in which we indicate those positions where $\mu$ has more boxes below the first row than $\lambda$ with a black dot.

\begin{figure}[H]
\begin{center}
\hspace{-1cm}\includegraphics[scale=0.16]{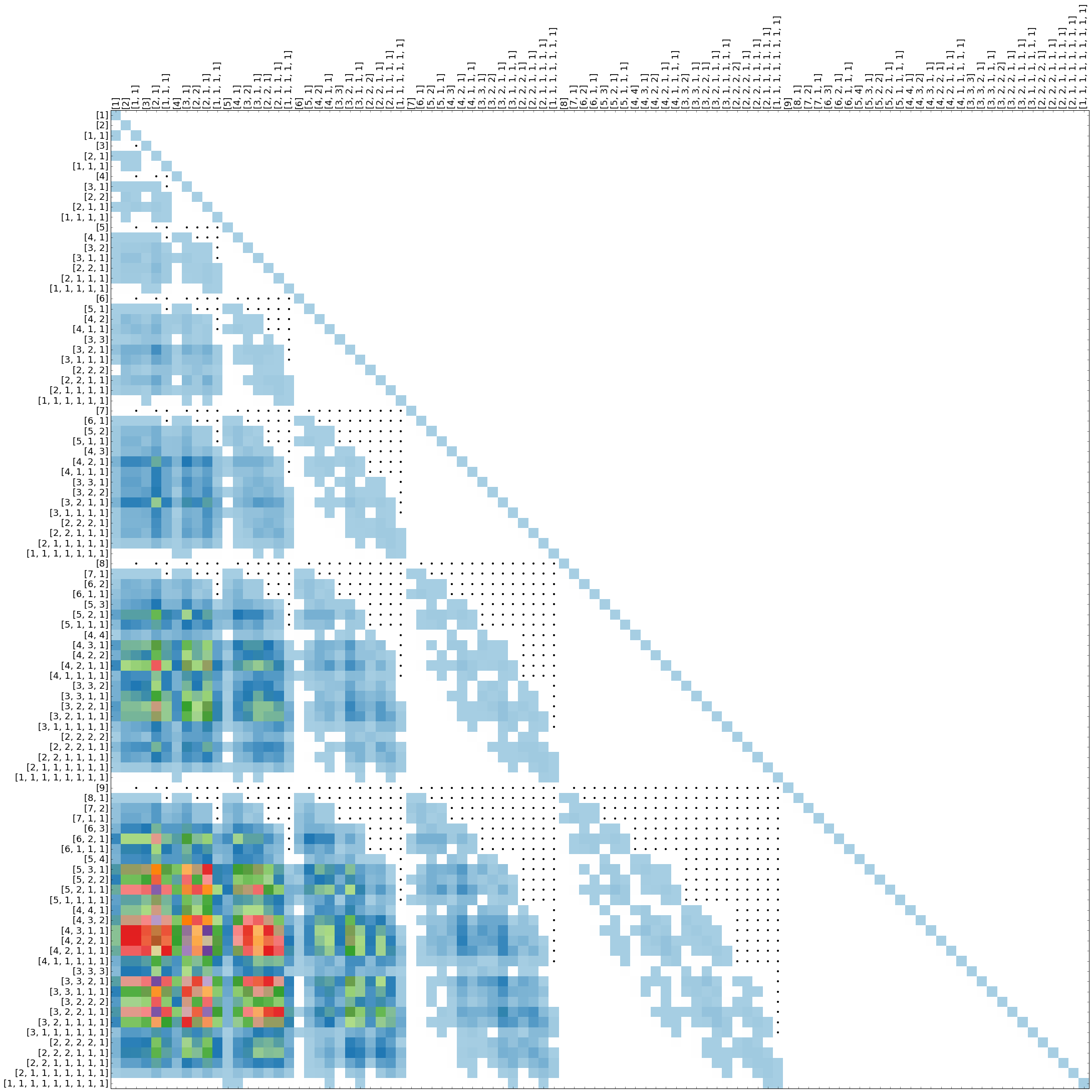}
\end{center}
\caption{Plot displaying when $\mu$ has more boxes below the first row than $\lambda$. Motivation for Conjecture \ref{ConjectureBoxesBelow}}
\end{figure}

\noindent These observations motivate the following conjectures.

\begin{conjecture} \label{ConjectureBoxesBelow}
Let $\lambda, \mu$ be partitions such that $\mu$ has more boxes below the first row then $\lambda$. Then the coefficient $c_{\lambda\mu}=0$.
\end{conjecture}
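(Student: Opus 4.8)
The plan is to argue via Theorem \ref{TheoremPuzzles}: it suffices to show that whenever $\mu$ has strictly more boxes below the first row than $\lambda$, there are no solutions to the $(\mu,\lambda)$ decomposition puzzle, i.e.\ $\Sigma_{(\mu,\lambda)} = \emptyset$. A solution consists of a good $\mu$-decomposition $(\mu_1,\ldots,\mu_k)$ together with a pairing with Lie pieces $(l_{i_1},\ldots,l_{i_k})$ such that $\lambda$ appears in $(\mu_1 \odot l_{i_1})\otimes \cdots \otimes (\mu_k \odot l_{i_k})$. So I would track the statistic ``number of boxes strictly below the first row'' — write $b(\nu) = |\nu| - \nu_1$ — through each of the operations $\odot$ and $\otimes$, and show that $b$ can only \emph{grow} as we assemble, so that $b(\lambda) \geq b(\mu)$ for any $\lambda$ that appears. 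This contradicts $b(\lambda) < b(\mu)$.

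The key steps, in order. First, control $b$ under the tensor product: if $\nu$ appears in $\alpha \otimes \beta$, then by the Littlewood-Richardson rule $\nu$ is obtained by adding a skew shape of content $\beta$ to $\alpha$, so $\nu_1 \leq \alpha_1 + \beta_1$, hence $b(\nu) = |\nu| - \nu_1 \geq (|\alpha| - \alpha_1) + (|\beta| - \beta_1) = b(\alpha) + b(\beta)$; thus $b$ is superadditive under $\otimes$. Second, control $b$ under plethysm: I claim that if $\nu$ appears in $\mu_j \odot l_{i_j}$ then $b(\nu) \geq |\mu_j|\cdot b(l_{i_j}) + b(\mu_j)\cdot(\text{something}\geq 1)$ — more carefully, I expect the clean bound $b(\mu_j \odot l_{i_j}) \geq b(\mu_j)$, using that the trivial-partition case $l_{i_j} = [r]$ gives $\mu_j \odot [r] = \mathrm{Sym}^r$-type plethysm whose constituents all have first row at most $r\mu_{j,1}$, and every other Lie piece $l_{i_j}$ (being non-trivial, in fact of the form with $b(l)\geq 1$ for $|l|\geq 2$) only pushes more boxes down. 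Third, combine: since $b(\mu) = \sum_j b(\mu_j)$ would be the naive expectation but is in fact $b(\mu) \leq \sum_j b(\mu_j) + (\text{correction from LR assembly of the }\mu_j\text{'s})$, I need the bound the other way, namely that the $\mu$-decomposition being good forces $b(\mu) \leq \sum_j b(\mu_j)$ as well (again by superadditivity of $b$ under $\otimes$ applied to $\mu_1 \otimes \cdots \otimes \mu_k \ni \mu$). Putting the pieces together: $b(\lambda) \geq \sum_j b(\mu_j \odot l_{i_j}) \geq \sum_j b(\mu_j) \geq b(\mu)$, contradicting the hypothesis.

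The main obstacle I anticipate is the plethysm step — showing $b$ does not decrease under $\mu_j \odot l_{i_j}$. The tensor-product behaviour is a direct consequence of the Littlewood-Richardson rule, but plethysm has no such clean combinatorial model, and a priori a plethysm $\alpha \odot \beta$ could produce a partition with a very long first row (e.g.\ $[1]\odot[r] = [r]$ has $b=0$). The saving facts are that (i) for the trivial Lie piece $l=[r]$ we necessarily have $\mu_j$ itself carrying the boxes, and the relevant constituents still satisfy $b(\mu_j \odot [r]) \geq b(\mu_j)$ since $(\mu_j \odot [r])_1 \leq r \mu_{j,1}$ while $|\mu_j \odot [r]| = r|\mu_j|$; and (ii) for non-trivial Lie pieces $l$ we have $l_1 < |l|$, i.e.\ $b(l) \geq 1$, and one can bound $(\mu_j \odot l)_1 \leq \mu_{j,1} \cdot l_1$ (the first row of a plethysm is at most the product of first rows — this is the crucial inequality to pin down, likely via highest-weight / $\mathrm{GL}$-character considerations since the first part of a partition is the highest weight of the $\mathrm{GL}_1$-isotypic leading term), giving $b(\mu_j \odot l) = |l||\mu_j| - (\mu_j\odot l)_1 \geq |l||\mu_j| - \mu_{j,1}l_1 \geq b(\mu_j)$ with room to spare. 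Making inequality (ii) rigorous — that the maximal first part of any constituent of $\alpha \odot \beta$ is $\alpha_1 \beta_1$ — is the technical heart; it should follow from the fact that $\mathrm{Sym}^{\bullet}$ of the leading $\mathrm{GL}$-weight space controls the leading row, but I would want to state and prove it as a standalone lemma before assembling the argument above.
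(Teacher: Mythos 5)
First, note that the paper offers no proof of this statement: it appears only as a conjecture motivated by the data, so your attempt is not competing with an argument in the text, and your overall strategy — running the statistic $b(\nu)=|\nu|-\nu_1$ through Theorem \ref{TheoremPuzzles}, with superadditivity of $b$ under $\otimes$ coming from the Littlewood--Richardson rule — is a sensible one. However, as written the argument has two genuine gaps. (a) Your key plethysm inequality, that every constituent of $\mu_j\odot l$ has first part at most $\mu_{j,1}\cdot l_1$, is false: in the paper's usage $\mu_j\odot l$ means $\SF{\mu_j}(\SF{l}(V))$, and already $(1,1)\odot(1,1)=\Lambda^2(\Lambda^2 V)=\SF{(2,1,1)}(V)$ has first part $2>1\cdot 1$; similarly $(1,1)\odot(2,1)$ contains $(4,1,1)$ with first part $4>1\cdot 2$. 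The correct bound is $\nu_1\le |\mu_j|\cdot l_1$ (each monomial of $s_{\mu_j}[s_l]$ is a product of $|\mu_j|$ monomials of $s_l$, each of $x_1$-degree at most $l_1$), giving $b(\nu)\ge |\mu_j|\cdot b(l)$. (b) Your closing step needs $\sum_j b(\mu_j)\ge b(\mu)$ for a good $\mu$-decomposition, and you invoke superadditivity for it — but superadditivity gives exactly the opposite inequality $b(\mu)\ge\sum_j b(\mu_j)$, and the inequality you need is simply false: $\mu=(1,1)$ with the good decomposition $((1),(1))$ has $\sum_j b(\mu_j)=0<1=b(\mu)$. So the chain $b(\lambda)\ge\sum_j b(\mu_j)\ge b(\mu)$ does not close.

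The argument can be repaired, but by different bookkeeping rather than comparing $\sum_j b(\mu_j)$ with $b(\mu)$. The only single-row Lie piece is $(1)$ (by Stanley's criterion $\Lie_d$ has no one-row constituent for $d\ge 2$), and a pairing uses Lie pieces with \emph{distinct} indices, so at most one factor $j_0$ has $l_{i_{j_0}}=(1)$. For every other factor, $b(l_{i_j})\ge 1$, so by the corrected bound every constituent $\nu_j$ of $\mu_j\odot l_{i_j}$ satisfies $b(\nu_j)\ge|\mu_j|$; for the exceptional factor, $\mu_{j_0}\odot(1)=\mu_{j_0}$, and goodness of the decomposition forces $\mu_{j_0}\subseteq\mu$, hence $\mu_{j_0,1}\le\mu_1$. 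Superadditivity of $b$ under $\otimes$ then gives, for any constituent $\lambda$ of the assembly,
\[
b(\lambda)\;\ge\; b(\mu_{j_0})+\sum_{j\ne j_0}|\mu_j| \;=\; |\mu|-\mu_{j_0,1}\;\ge\;|\mu|-\mu_1\;=\;b(\mu),
\]
and if no factor uses $(1)$ then $b(\lambda)\ge|\mu|\ge b(\mu)$ outright. Combined with Theorem \ref{TheoremPuzzles} this yields the conjecture; I encourage you to write this version up carefully (in particular the statement and proof of the bound $\nu_1\le|\mu_j|\cdot l_1$ and the fact that $(1)$ is the unique one-row Lie piece), since it would settle a statement the paper leaves open.
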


\begin{remark}
One is tempted to make the symmetrical conjecture regarding ``boxes to the right of the first column". However, upon inspection one sees that is not the case. For example,
\[
\Yvcentermath1
\lambda ~=~ \yng(2,1,1,1) \qquad \mu ~=~ \yng(2,2)
\]
In this case $\lambda$ has one box to the right of the first column, and $\mu$ has two, but,
\[
c_{\lambda\mu} = 1.
\]
A little experimenting with the data leads one to this similar conjecture.
\end{remark}

\begin{conjecture}
Let $\lambda, \mu$ be partitions. Let $m$ be the number of boxes outside the first row and first column in $\mu$, and let $n$ be the number of boxes to the outside the first column in $\lambda$. Then if $m > n$ the coefficient $c_{\lambda\mu}=0$.
\end{conjecture}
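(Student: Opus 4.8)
The plan is to argue through the puzzle description of Theorem~\ref{TheoremPuzzles}. For a partition $\nu=(\nu_1,\ldots,\nu_{\ell(\nu)})$ write $b(\nu)=|\nu|-\ell(\nu)$ for the number of boxes of $\nu$ outside its first column and $h(\nu)=|\nu|-\nu_1-\ell(\nu)+1$ for the number outside its first row and first column, so that the statement to be proved is precisely: $c_{\lambda\mu}=0$ whenever $h(\mu)>b(\lambda)$. By Theorem~\ref{TheoremPuzzles} it suffices to show that if $\lambda$ occurs in some assembly $(\rho_1,\ldots,\rho_k)\oast(l_{i_1},\ldots,l_{i_k})$ of a good $\mu$-decomposition, then $b(\lambda)\ge h(\mu)$.

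First I would record how $b$ behaves under the two operations that build an assembly. It is super-additive under tensor product: if $\lambda$ occurs in $\alpha\otimes\beta$ then the Littlewood--Richardson rule forces $\ell(\lambda)\le\ell(\alpha)+\ell(\beta)$ --- in any LR filling of $\lambda/\alpha$ with content $\beta$, column-strictness and the lattice condition force the first column to read $1,2,3,\ldots$ downward, so at most $\ell(\beta)$ rows lie below row $\ell(\alpha)$ --- while $|\lambda|=|\alpha|+|\beta|$; hence $b(\lambda)\ge b(\alpha)+b(\beta)$. Iterating over the tensor factors of an assembly gives $b(\lambda)\ge\sum_j b(\kappa_j)$, where $\kappa_j\in\rho_j\odot l_{i_j}$ is a choice of partitions with $\lambda$ occurring in $\kappa_1\otimes\cdots\otimes\kappa_k$. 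For a plethysm factor, the inclusion $\SF{\rho_j}\bigl(\SF{l_{i_j}}(V)\bigr)\subseteq\bigl(\SF{l_{i_j}}(V)\bigr)^{\otimes|\rho_j|}$, the row-count bound above applied $|\rho_j|$ times, and Lemma~\ref{LemmaAssemblyNoDec} give $\ell(\kappa_j)\le|\rho_j|\cdot\ell(l_{i_j})$, hence $b(\kappa_j)\ge|\rho_j|\cdot b(l_{i_j})$.

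Next I would bound $h(\mu)$ from above using the structure of a good decomposition: since $L^\mu_{\rho_1\cdots\rho_k}>0$ we have $\rho_j\subseteq\mu$ for every $j$, so for a piece $\rho$ of maximal size, $\mu_1\ge\rho_1$ and $\ell(\mu)\ge\ell(\rho)$, whence $h(\mu)\le h(\rho)+(|\mu|-|\rho|)\le|\mu|$. If every Lie piece in the pairing had $b(l_{i_j})\ge1$ we would be done at once, since then $b(\lambda)\ge\sum_j b(\kappa_j)\ge\sum_j|\rho_j|=|\mu|\ge h(\mu)$. The catch is the factors with $b(l_{i_j})=0$: because a pairing uses distinct indices of $\mathbb{L}$ and --- by Stanley's $\mathrm{maj}$-rule --- the only column-shaped Lie pieces are $[1]$ and $[1,1]$, there are at most two such factors, one paired with $[1]$ (where $\kappa_j=\rho_j$, so $b(\kappa_j)=|\rho_j|-\ell(\rho_j)$) and one paired with $[1,1]$ (where $\kappa_j$ lies in the plethysm $\SF{\rho_j}(\Lambda^2 V)$ and the crude bound only yields $b(\kappa_j)\ge0$).

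The step I expect to be the genuine obstacle is recovering, for these at most two degenerate factors, a lower bound on $b(\kappa_j)$ strong enough to absorb the shortfall against $h(\mu)$: equivalently, a sharper estimate for the number of rows of a partition occurring in $\SF{\rho_j}(\Lambda^2 V)$, together with a careful accounting of the $[1]$-factor's shortfall $\ell(\rho_j)$ against the slack available from the remaining factors (and the always-valid $b(\nu)\ge h(\nu)$). I would attack the plethysm bound using the classical identities for $\SF{\nu}(\Lambda^2 V)$ --- for $\mathrm{Sym}^m$, the partitions whose columns all have even length; for $\Lambda^m$, the ``near self-conjugate'' Frobenius shapes --- together with the $\mathrm{maj}$-description of $\mathbb{L}$, and then package the argument as an induction on $|\mu|$ that removes $\rho$ together with its Lie partner and feeds on the recursions for $L^\mu_{\rho_1\cdots\rho_k}$ and for $\oast$. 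Two sanity checks the plan should pass: when $\mu$ is a hook the hypothesis $h(\mu)>b(\lambda)$ is never met, so the statement is vacuous, and the extreme case $|\lambda|=|\mu|$ reproduces Lemma~\ref{LemmaMuSizeLambda}. Finally, a cheaper route via Conjecture~\ref{ConjectureBoxesBelow} seems blocked: that conjecture would give $b(\lambda)\ge|\mu|-\ell(\mu)\ge h(\mu)$ at once if one had the symmetry $c_{\lambda\mu}=c_{\lambda'\mu'}$, but already $c_{[1,1],[1]}=1\neq0=c_{[2],[1]}$ (reflecting $\mathcal{L}_2(V)=\Lambda^2V\neq\mathrm{Sym}^2V$), so the plethysm estimate above seems unavoidable.
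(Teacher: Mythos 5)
The paper does not prove this statement: it appears only as a conjecture, formulated from inspection of the computed coefficients up to degree $14$ (the companion paper \cite{Amin2} is claimed to prove the plateau conjecture, not this one). So there is no paper proof to compare against, and your text has to stand on its own as a complete argument --- which, by your own admission, it does not yet. What you have is a plausible strategy whose preparatory reductions are correct: superadditivity of $b(\nu)=|\nu|-\ell(\nu)$ under $\otimes$ via the Littlewood--Richardson length bound $\ell(\lambda)\le\ell(\alpha)+\ell(\beta)$; the bound $b(\kappa)\ge|\rho|\cdot b(l)$ for $\kappa$ occurring in $\rho\odot l$; the containment $\rho_j\subseteq\mu$ forced by $L^\mu_{\rho_1\cdots\rho_k}>0$, giving $h(\mu)\le h(\rho)+(|\mu|-|\rho|)$; and the identification, via Stanley's maj rule, of $[1]$ and $[1,1]$ as the only column-shaped Lie pieces, each occurring once in $\mathbb{L}$.

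The gap is exactly the step you flag, and it is where the entire content of the conjecture sits. With a $[1]$-factor $\rho_{j_1}$ and a $[1,1]$-factor $\rho_{j_2}$ present, your bounds only give $b(\lambda)\ge|\mu|-|\rho_{j_2}|-\ell(\rho_{j_1})$, which can fall strictly short of $h(\mu)$. Concretely, take $\mu=(2,2)$ with the good decomposition $((1,1),(1,1))$ paired with $([1],[1,1])$: your estimates yield $b(\kappa_{j_1})=0$ and $b(\kappa_{j_2})\ge 0$, hence only $b(\lambda)\ge 0$, while $h(\mu)=1$; the desired inequality survives here only because $(1,1)\odot(1,1)=\SF{(2,1,1)}$ actually has $b=1$ --- i.e.\ precisely because of the sharper description of $\SF{\rho}(\Lambda^2 V)$ that you defer. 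The effect gets worse as $\rho_{j_2}$ grows (for $\mu=(2,2,2)$ decomposed as $((1,1,1),(1,1,1))$ the crude bound gives $0$ against $h(\mu)=2$, and one needs the full Frobenius-shape description of $\Lambda^3(\Lambda^2V)$ to recover it), and the shortfall $\ell(\rho_{j_1})$ from the $[1]$-factor has to be absorbed simultaneously. Until that refined plethysm estimate and the combined accounting are actually carried out --- or a counterexample is found, which remains possible since the statement is only empirically supported --- this is a proof plan, not a proof.
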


\section{Running time experiments} \label{SecRunningTime}

In this section we present the results from an experiment comparing the running times of our algorithm against the baseline algorithm. We first describe our experimental set up. All our running time experiments were performed on computer with a 2.3 GHz Intel Core i7 processor and 16GB RAM. Computations were repeated 10 times and averaged. We compute all composition factors up to degree $d$ using both the baseline and our own algorithm. Here are the corresponding running times (in seconds).

\begin{table}[H]
\begin{center}\begin{tabular}{c|ccccccccc}
Degree & 1 & 2 & 3 & 4 & 5 & 6 & 7 & 8 & 9 \\\hline
Baseline & 0.00189 & 0.009 & 0.107 & 2.35 & 119 & $\infty$ & $\infty$ & $\infty$ & $\infty$ \\
Our algorithm & 0.00237 & 0.0109 & 0.0394 & 0.0979 & 0.239 & 0.703 & 1.82 & 5.64 & 21.9
\end{tabular}
\caption{Running times (in seconds) comparing our algorithm's performance with the baseline.}
\end{center}
\end{table}

\noindent We plot these against a log-scale to account for the differences in running times in seconds.

\begin{center}
\hspace{-35pt} \includegraphics[scale=0.5]{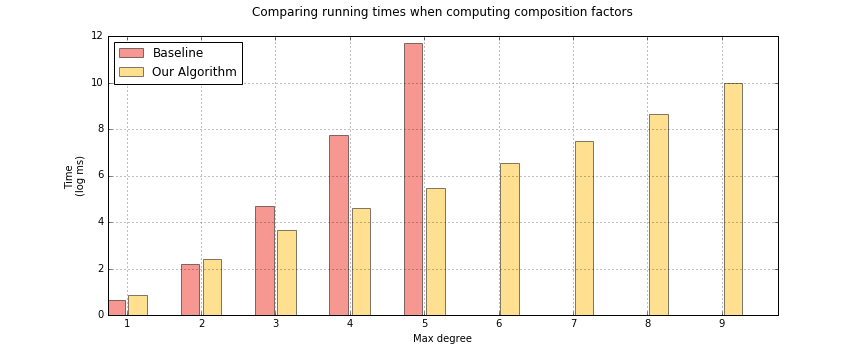}
\end{center}

\noindent Recall that the number of coefficients is increasing rapidly as a function of maximum degree. Below we plot the running times per coefficient. We use the same logarithmic scale in milliseconds.

\begin{center}
\hspace{-35pt} \includegraphics[scale=0.5]{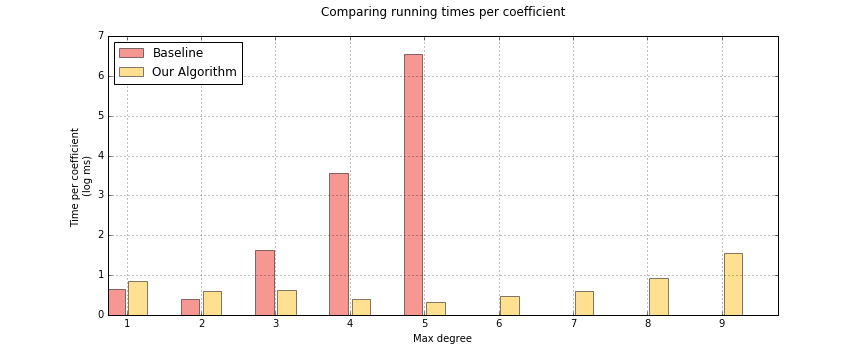}
\end{center}

\subsection{Baseline Algorithm} \label{SecAppendix}
We present the baseline algorithm using SAGE's built-in methods. We first assemble the symmetric function \texttt{lie} corresponding to the truncation $\mathcal{L}_{\leq n}(V)$. For this we use the same implementation for free Lie algebra class \texttt{Lie} (see our source code on GitHub\footnote{\url{https://github.com/aminsaied/composition_factors}}). The key step of this simple algorithm is to compute the plethysm $\SF{\mu}(\mathcal{L}_{\leq d}(V))$, which is implemented in SAGE as follows.

\begin{verbatim}
sage: f = s(mu).plethysm(lie)
\end{verbatim}

\noindent Our baseline algorithm just iterates this over all partitions $\mu \vdash m \leq d$.

\begin{minipage}{.9\linewidth}

\begin{algorithm}[H]

\label{AlgCompositionFactors}

\SetKwFunction{BuildInstructions}{\textsc{build$\_$instructions}}
\SetKwFunction{IterLR}{\textsc{iter$\_$lr}}
\SetKwFunction{CompositionFactors}{\textsc{composition$\_$factors}}

\BlankLine

\KwInput{A target-size $d \in \mathbb{Z}_{>0}$.}

\BlankLine

\KwResult{Compute composition factors $c_{\lambda\mu}$ for all partitions $\lambda$ of size $d$.}

\BlankLine

Initialise all coefficients $c_{\lambda\mu} = 0$.

\BlankLine

$\Lie \gets$ array of Lie pieces of size at most $d$

\BlankLine

\For{$m \leq d$}{

\BlankLine

	\For{$\mu \vdash m$}{
	
\BlankLine
		
		$f \gets \mu \odot \Lie$
		
\BlankLine
		
		\For{$\lambda \in f$}{
		
			$c_{\lambda\mu} \gets c_{\lambda\mu} + 1$
		
		}
	
	}

}

\caption{Baseline algorithm computing $c_{\lambda\mu}$.}

\end{algorithm}

\end{minipage}

\bibliography{/Users/Amin/Dropbox/Math/Latex/coreFiles/aminbib}
\bibliographystyle{acm}

\end{document}